    \pgfplotsset{compat=newest,unit code/.code={\si{#1}},plot coordinates/math parser=false,grid style={lightgray}}
    \tikzstyle{block} = [draw, rectangle, minimum height=2em, minimum width=5em]
\tikzstyle{addon} = [draw, rectangle, rounded corners]
\tikzstyle{pinstyle} = [pin edge={<-,thin,black}]
\tikzstyle{pinstyle2} = [pin edge={->,thin,black}]
\tikzstyle{mult} = [draw, isosceles triangle]
\tikzstyle{circ} = [draw, circle]
\tikzstyle{coord} = [coordinate]
\tikzstyle{circ2} = [draw, circle,minimum width=3pt, inner sep=0]
\tikzset{>=latex}
\tikzset{radiation/.style={{decorate,decoration={expanding
waves,angle=90,segment length=4pt}}}}
    \newcommand{\cmark}{\textcolor{green!50!black}{\ding{51}}}
    \newcommand{\xmark}{\textcolor{red}{\ding{55}}}
\newtheorem{theo}{Theorem}
\newtheorem{lem}{Lemma}
\newtheorem{defi}{Definition}
\newtheorem{cor}{Corollary}
\newtheorem{remark}{Remark}
\DeclareSIUnit{\belmilliwatt}{Bm}
\DeclareSIUnit{\dBm}{\deci\belmilliwatt}
\DeclareMathOperator*{\E}{\mathbb{E}}
\DeclareMathOperator*{\Var}{Var}
\DeclareMathOperator*{\Tr}{Tr}
\newcommand{\norm}[1]{\left\lVert#1\right\rVert}
\newcommand{\mah}[1]{\left\lVert#1\right\rVert_\mathrm{M}}
\newcommand{\abs}[1]{\left\lvert#1\right\rvert}
\newcommand*\diff{\mathop{}\!\mathrm{d}}
\DeclareMathOperator*{\R}{\mathbb{R}}
\let\originalleft\left
\let\originalright\right
\renewcommand{\left}{\mathopen{}\mathclose\bgroup\originalleft}
\renewcommand{\right}{\aftergroup\egroup\originalright}
\newcommand\figref[1]{Fig.~\ref{#1}}
\newcommand\tabref[1]{Table~\ref{#1}}
\newcommand\secref[1]{Sec.~\ref{#1}}
\newcommand{\etal}{et~al.\xspace}
\newcommand{\eg}{e.g.,\xspace}
\newcommand{\ie}{i.e.,\xspace}
\newcommand{\capt}[1]{\mdseries{\emph{#1}}}
\newcommand{\cf}{cf.\xspace}
\newcommand{\iid}{i.i.d.\xspace}
\newcommand{\cps}{CPS\xspace}
\newcommand{\ap}{AP\xspace}
\newcommand{\lti}{LTI\xspace}
\newcommand{\cp}{CP\xspace}
\newcommand{\dpp}{DPP\xspace}
\newcommand{\nAgent}{\ensuremath{N}\xspace}
\newcommand{\nMsg}{\ensuremath{M}\xspace}
\newcommand{\nMsgCntrl}{\ensuremath{M_\mathrm{C}}\xspace}
\newcommand{\nMsgApp}{\ensuremath{M_\mathrm{A}}\xspace}
\newcommand{\horizon}{\ensuremath{H}\xspace}
\newcommand{\aggsize}{\ensuremath{W_P}\xspace}
\newcommand{\fm}[1]{\footnote{{\bf\color{blue} Fabian: #1}}}
\newcommand{\mz}[1]{\footnote{{\bf\color{blue} Marco: #1}}}
\newcommand{\db}[1]{\footnote{{\bf\color{green!50!black} Dominik: #1}}}
\newcommand{\st}[1]{\footnote{{\bf\color{purple!90!black} Sebastian: #1}}}
\newcommand{\ch}[1]{\footnote{{\bf\color{orange!50!black} Carsten: #1}}}
\newcommand{\ag}[1]{\footnote{{\bf\color{purple!50!black} Alexander: #1}}}
\newcommand{\fm}[1]{}
\newcommand{\mz}[1]{}
\newcommand{\db}[1]{}
\newcommand{\st}[1]{}
\newcommand{\ch}[1]{}
\newcommand{\ag}[1]{}
\begin{document}

\author{Fabian Mager}
\orcid{0000-0003-0468-0691}
\authornote{Both authors contributed equally to this work.}
\affiliation{%
	\department{Networked Embedded Systems Lab}
	\institution{TU Dresden}
	\streetaddress{Helmholtzstra{\ss}e 18}
	\postcode{01069}
	\city{Dresden}
	\country{Germany}
}
\email{fabian.mager@tu-dresden.de}

\author{Dominik Baumann}
\orcid{0000-0001-7340-2180}
\authornotemark[1]
\affiliation{%
	\department{Data Science in Mechanical Engineering}
	\institution{RWTH Aachen University}
	\streetaddress{Dennewartstra{\ss}e 27}
	\postcode{52068}
	\city{Aachen}
	\country{Germany}
}
\email{dominik.baumann@dsme.rwth-aachen.de}

\author{Carsten Herrmann}
\orcid{0000-0002-2804-318X}
\affiliation{%
	\department{Networked Embedded Systems Lab}
	\institution{TU Dresden}
	\streetaddress{Helmholtzstra{\ss}e 18}
	\postcode{01069}
	\city{Dresden}
	\country{Germany}
}
\email{carsten.herrmann@tu-dresden.de}

\author{Sebastian Trimpe}
\orcid{0000-0002-2785-2487}
\affiliation{%
	\department{Data Science in Mechanical Engineering}
	\institution{RWTH Aachen University}
	\streetaddress{Dennewartstra{\ss}e 27}
	\postcode{52068}
	\city{Aachen}
	\country{Germany}
}
\email{trimpe@dsme.rwth-aachen.de}

\author{Marco Zimmerling}
\orcid{0000-0003-1450-2506}
\affiliation{%
	\department{Networked Embedded Systems Lab}
	\institution{TU Dresden}
	\streetaddress{Helmholtzstra{\ss}e 18}
	\postcode{01069}
	\city{Dresden}
	\country{Germany}
}
\email{marco.zimmerling@tu-dresden.de}

\renewcommand{\shortauthors}{F.~Mager et al.}

\authorsaddresses{%
Authors’ addresses: Fabian Mager, Carsten Herrmann, Marco Zimmerling, Networked Embedded Systems Lab, Center for Advancing Electronics Dresden (cfaed), TU Dresden, Helmholtzstraße 18, 01069 Dresden, Germany, \{fabian.mager, carsten.herrmann, marco.zimmerling\}@tu-dresden.de;
Dominik Baumann, Sebastian Trimpe, Data Science in Mechanical Engineering, RWTH Aachen University, Dennewartstraße 27, 52068 Aachen, Germany, \{dominik.baumann, trimpe\}@dsme.rwth-aachen.de.}

\title{Scaling Beyond Bandwidth Limitations: Wireless Control With Stability Guarantees Under Overload}

\begin{abstract}
An important class of cyber-physical systems relies on multiple agents that jointly perform a task by coordinating their actions over a wireless network.
Examples include self-driving cars in intelligent transportation and production robots in smart manufacturing.
However, the scalability of existing control-over-wireless solutions is limited as they cannot resolve overload situations in which the communication demand exceeds the available bandwidth.
This paper presents a novel co-design of distributed control and wireless communication that overcomes this limitation by dynamically allocating the available bandwidth to agents with the greatest need to communicate.
Experiments on a real cyber-physical testbed with 20 agents, each consisting of a low-power wireless embedded device and a cart-pole system, demonstrate that our solution achieves significantly better control performance under overload than the state of the art.
We further prove that our co-design guarantees closed-loop stability for physical systems with stochastic linear time-invariant dynamics.
\end{abstract}

\begin{CCSXML}
<ccs2012>
<concept>
<concept_id>10010520.10010553.10010559</concept_id>
<concept_desc>Computer systems organization~Sensors and actuators</concept_desc>
<concept_significance>500</concept_significance>
</concept>
<concept>
<concept_id>10010520.10010553.10010562</concept_id>
<concept_desc>Computer systems organization~Embedded systems</concept_desc>
<concept_significance>300</concept_significance>
</concept>
<concept>
<concept_id>10010520.10010570.10010574</concept_id>
<concept_desc>Computer systems organization~Real-time system architecture</concept_desc>
<concept_significance>300</concept_significance>
</concept>
<concept>
<concept_id>10010520.10010575</concept_id>
<concept_desc>Computer systems organization~Dependable and fault-tolerant systems and networks</concept_desc>
<concept_significance>300</concept_significance>
</concept>
<concept>
<concept_id>10003033.10003106.10003112</concept_id>
<concept_desc>Networks~Cyber-physical networks</concept_desc>
<concept_significance>500</concept_significance>
</concept>
<concept>
<concept_id>10003033.10003039.10003040</concept_id>
<concept_desc>Networks~Network protocol design</concept_desc>
<concept_significance>300</concept_significance>
</concept>
</ccs2012>
\end{CCSXML}

\ccsdesc[500]{Computer systems organization~Sensors and actuators}
\ccsdesc[300]{Computer systems organization~Embedded systems}
\ccsdesc[300]{Computer systems organization~Real-time system architecture}
\ccsdesc[300]{Computer systems organization~Dependable and fault-tolerant systems and networks}
\ccsdesc[500]{Networks~Cyber-physical networks}
\ccsdesc[300]{Networks~Network protocol design}

\keywords{Wireless control, Closed-loop stability, Multi-agent systems, Multi-hop networks, Cyber-physical systems, Network overload}

\maketitle

\section{Introduction}
\label{sec:intro}

Distributed control over wireless networks is essential for cyber-physical systems (\cps) in which multiple agents work on a common task.
Examples include mobile robots jointly manufacturing a product~\cite{baumann2020manufacturing,wang2016implementing} and drones flying in formation in a rescue mission~\cite{Hayat2016}.
To support emerging multi-agent \cps, a tight integration and co-design of wireless \mbox{communication and control is needed that:}
\begin{itemize}
	\item \emph{Facilitates distributed control.} To coordinate their activities, each agent must be capable of exchanging messages with every other agent.
	In this way, each agent can drive a local control loop based on local sensor readings (\eg a drone can stabilize its flight), while in addition communication with other agents allows to solve a distributed control task (\eg drone swarm keeping a desired formation).
	This is commonly referred to as \emph{multi-agent systems}~\cite{Lunze2014multi}.
	\item \emph{Tames and accounts for network imperfections.} Control of dynamical systems like drone swarms requires information exchange every few hundred milliseconds across large distances~\cite{Preiss2017}.
	Thus, multi-hop communication with bounded latency and high reliability is crucial.
	Moreover, because wireless communication is notoriously unreliable, occasional message losses and communication delays must be accounted for by the control design.
	\item \emph{Caters for small cost, weight, form factor, and energy consumption.} Depending on the application scenario it can be beneficial, if not necessary, to deploy the entire multi-agent \cps on low-cost, low-power embedded hardware with small weight and form factor, for example, to not exceed the maximum payload of a drone or to support remote energy-harvesting sensors~\cite{baumann2020manufacturing}.
\end{itemize}

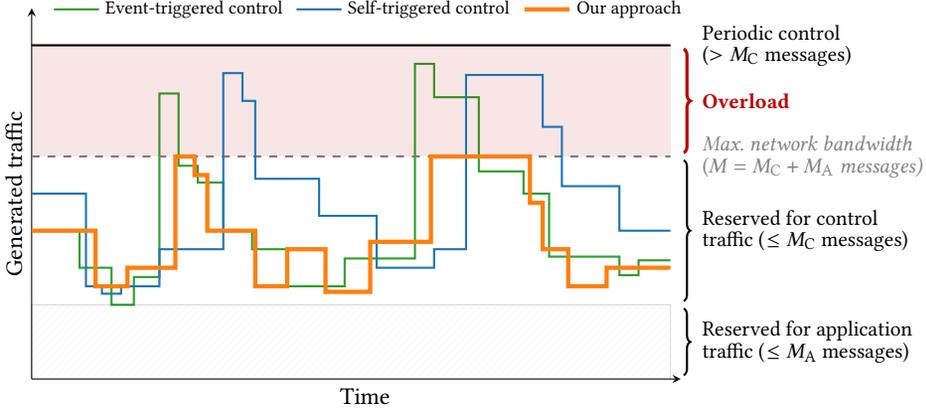
\begin{figure*}[!tb]
	\centering
	\definecolor{pred_color}{RGB}{255, 127, 14}
\definecolor{periodic_color}{RGB}{0, 0, 0}
\definecolor{etc_color}{RGB}{44, 160, 44}
\definecolor{stc_color}{RGB}{31, 119, 180}
\definecolor{red_color}{RGB}{189, 0, 0}

\begin{tikzpicture}
\begin{axis}[
 domain=0:100,
  width=\textwidth,
 height=0.32\textheight,
axis lines = left,
xlabel={Time},
xmin=0, xmax=145,
ylabel={Generated traffic},
label style={font=\small},
ymin=10, ymax=110,
 ticks = none,
 x axis line style={shorten >=105, shorten <=0},
 x label style={at={(axis description cs:0.36,0)},anchor=north},
 legend style={font=\scriptsize, at={(axis cs:2,115)},anchor=north west, draw=none, /tikz/every even column/.append style={column sep=0.1cm}},
 legend columns=3,
 legend cell align={left},
]

\addplot[thick, etc_color]
table[row sep = crcr]{
0       50\\
7.5      50\\
7.5      40\\
12.5    40\\
12.5    30\\
16      30\\
16      37.5\\
20      37.5\\
20      87\\
23      87\\
23      67.5\\
26      67.5\\
26      63\\
30      63\\
30      50\\
34.5    50\\
34.5    45\\
40      45\\
40      35\\
49      35\\
49      42.5\\
60      42.5\\
60      95\\
63      95\\
63      86\\
70      86\\
70      66\\
77      66\\
77      60\\
81      60\\
81      43\\
88      43\\
92      43\\
92      38\\
95      38\\
95      42\\
100     42\\
};
\addlegendentry{Event-triggered control}

\node [] (per_ctrl) at(105,115) {};

\addplot[thick, periodic_color, forget plot] {100} node[pos=1.25] (periodic_helper) {};
\node [periodic_color, anchor=west, align=left, font=\footnotesize\linespread{0.8}\selectfont] at(per_ctrl.west|-periodic_helper) (periodic_control)
{Periodic control\\($>\nMsgCntrl$ messages)};

\addplot[dashed, thick, gray, forget plot] {70}node[pos=1.25](bandwidth_helper){};
\draw[pattern=north east lines, opacity=0.15] (0,0) rectangle (100, 30);

\node [gray, anchor=west, align=left, font=\footnotesize\linespread{0.8}\selectfont] at(per_ctrl.west|-bandwidth_helper)(bandwidth)
{\textit{Max. network bandwidth}\\($\nMsg=\nMsgCntrl+\nMsgApp$ \textit{messages)}};

\draw [red_color,decorate,very thick,decoration={brace},yshift=0pt,xshift=5pt]
(100,99) -- (100,71)node [midway](overload_helper) 
{};

\node [red_color, anchor=west] at(per_ctrl.west|-overload_helper)(overload){\footnotesize{\textbf{Overload}}};
\draw[fill=red_color, draw=none, opacity=0.1] (0,70) rectangle (100, 100);

\draw [decorate,decoration={brace},thick,yshift=0pt,xshift=5pt]
(100,69) -- (100,31) node [midway] (ctrl_traffic_helper) {};

\node [black, anchor=west, align=left, font=\footnotesize\linespread{0.8}\selectfont] at(per_ctrl.west|-ctrl_traffic_helper)(ctrl_traffic)
{Reserved for control\\traffic ($\le \nMsgCntrl$ messages)};

\draw [decorate,decoration={brace},thick,yshift=0pt,xshift=5pt]
(100,29) -- (100,11) node [midway](reserved_helper) 
{};

\node [black, anchor=west, align=left, font=\footnotesize\linespread{0.8}\selectfont] at(per_ctrl.west|-reserved_helper)(ctrl_traffic)
{Reserved for application\\traffic ($\le \nMsgApp$ messages)};

\addplot[thick, stc_color]
table[row sep = crcr]{
0       60\\
8.5     60\\
8.5     35\\
11      35\\
11      33\\
14      33\\
14      35\\
20      35\\
20      45\\
30      45\\
30      92.5\\
33      92.5\\
33      85\\
35      85\\
35      64\\
45      64\\
45      54\\
54      54\\
54      40\\
63      40\\
63		45\\
68		45\\
68      92\\
80      92\\
80      78\\
83      78\\
83      62\\
92      62\\
92      50\\
100     50\\
};
\addlegendentry{Self-triggered control}

\addplot[ultra thick, pred_color]
table[row sep = crcr]{
0       50\\
10      50\\
10      35\\
15      35\\
15      40\\
22.5    40\\
22.5    70\\
25.5    70\\
25.5    65\\
27.5    65\\
27.5    50\\
35      50\\
35      35\\
40      35\\
40      45\\
46      45\\
46      33.5\\
53      33.5\\
53      47\\
62.5    47\\
62.5    70\\
78      70\\
78      57.5\\
80      57.5\\
80      45\\
84      45\\
84      35\\
90      35\\
90      40\\
100     40\\
};
\addlegendentry{Our approach}

\end{axis}

\end{tikzpicture}
	\caption{
		Illustration of the problem and approaches.
		\capt{
			The network bandwidth \nMsg available per time step (update interval) can be used to transmit up to \nMsgCntrl control messages and up to $\nMsgApp$ application messages.
			A system is overloaded if the control traffic generated by periodic control exceeds \nMsgCntrl and therefore \nMsg.
			Using event- or self-triggered control in an overloaded system cannot prevent temporary bandwidth exceedances, resulting in unpredictable behavior.
			Our approach ensures that the generated control traffic never exceeds \nMsgCntrl, and closed-loop stability can be provably guaranteed.
		}
	}
	\label{fig:overview}
\end{figure*}

While meeting these requirements is challenging in itself, multi-agent \cps also face an \emph{overload problem}. %
To illustrate, let us consider the scenario shown in \figref{fig:overview}.
The bandwidth \nMsg available per time step (update interval) can be used to transmit up to \nMsgCntrl control messages and up to \nMsgApp messages carrying other application data, such as photos and video streams~\cite{Hayat2016,baumann2020manufacturing}.
However, as applications become ever more sophisticated---requiring, for example, more agents, shorter update intervals, and higher-volume data streams---the generated traffic inevitably exceeds the available bandwidth \nMsg.
Specifically, a system is \emph{overloaded}, when the control traffic generated by periodic control exceeds \nMsgCntrl.
Periodic control in an overloaded system causes additional message loss equal to the amount of bandwidth exceedance, which may make it impossible to guarantee closed-loop stability and to achieve the required control performance.

Advances in wireless communication technology cannot solve the overload problem: The network bandwidth remains a limited resource that is ultimately outrun by increasing application demands.
On the other hand, as illustrated in \figref{fig:overview} and detailed in \secref{sec:relWork}, existing approaches such as event- and self-triggered control can only reduce the generated control traffic \emph{on average} compared to periodic control.
However, in an overloaded system they typically cannot prevent situations where the bandwidth is temporarily exceeded.
The behavior of the system during such situations (\eg in terms of closed-loop stability) is unpredictable, which is unacceptable for critical \cps applications requiring a priori guarantees~\cite{rajkumar2010cyber}.

\paragraph{Contributions}
We present the design, analysis, and real-world evaluation of a wireless \cps that addresses the overload problem, while meeting all of the above-mentioned requirements.
Using our approach, the generated control traffic never exceeds the fraction \nMsgCntrl of the bandwidth reserved for control (see \figref{fig:overview}), and we derive stability guarantees for the entire multi-agent CPS.

As described in Secs.~\ref{sec:overview} to~\ref{sec:integrationStability}, our approach is based on a novel co-design and tight integration of wireless communication and control.
The key idea is to determine how urgent each agent needs to transmit control data, and to assign the available control bandwidth \nMsgCntrl in every update interval to those agents that currently have the highest need.
Although the communication system we design is highly reliable, occasional message loss cannot be avoided due to the limited time for communication.
Our control design accounts for such message loss as well as communication delays.
By tightly integrating communication and control, we reduce the jitter caused by imperfect synchronization of distributed hardware components in real \cps to the point where it can be neglected.
As a result, our overall solution is amenable to a formal end-to-end analysis of all relevant \cps components (communication, control, and physical system), which allows us to prove closed-loop stability for heterogeneous agents with stochastic linear time-invariant (\lti) dynamics.

We evaluate our approach on a 20-agent \cps testbed.
Each agent consists of a low-power wireless embedded device and a cart-pole system, whose dynamics are representative of mechanical systems found in real-world applications~\cite{Astrom2008,trimpeCSM12}.
The 20 agents form a 3-hop network, exchanging control traffic every \SI{100}{\milli\second} to synchronize the movement of their carts.
Our experiments demonstrate that, in the scenarios we tested, the overall \cps is stable as predicted by our theoretical analysis despite external disturbance.
The experimental results further show that our approach synchronizes the carts better than a highly optimized periodic baseline, while using fewer control messages.

In summary, this work makes the following contributions:
\begin{itemize}
 \item We present the first practical wireless \cps design that addresses the overload problem. With this, we improve the scalability of \cps toward future applications with increasing demands. %
 \item We formally prove that our wireless \cps design guarantees closed-loop stability for heterogeneous agents (\ie physical systems) with potentially different stochastic LTI dynamics.
 \item Real-world experiments on a 20-agent CPS %
 testbed confirm our theoretical results and demonstrate an improved control performance while using fewer control messages.
\end{itemize}

\section{Problem and Related Work}
\label{sec:relWork}

This section defines the research problem we tackle in this paper and reviews relevant prior work.

\subsection{Problem Formulation}

\paragraph{Scenario}
Motivated by emerging applications in search and rescue, manufacturing, or construction~\cite{Hayat2016,wang2016implementing,baumann2020manufacturing}, we consider wireless \cps consisting of \nAgent heterogeneous agents that jointly work on a distributed control task.
Each agent runs a controller that computes actuator commands based on local sensor readings and information received from other agents.
While local readings allow each agent to, for instance, stabilize itself, communication is essential to solve the distributed task, such as flying in formation.
To this end, the agents are equipped with radio frequency (RF) transceivers to exchange messages over a wireless multi-hop network.

The agents' physical dynamics and the required control performance govern the update interval at which control information is to be exchanged in a many-to-many fashion among the agents.
In this work, we target distributed control of mechanical systems requiring update intervals on the order of tens to hundreds of milliseconds~\cite{Akerberg2011,Preiss2017}.
Conversely, the bandwidth of the wireless network determines the number \nMsg of messages that can be exchanged within each update interval.
Out of these, as illustrated in \figref{fig:overview}, only $\nMsgCntrl \! < \! \nMsg$ messages can carry control information.
This is because the wireless network is also used to transmit other application data (\eg video streams, photos, status and configuration data), which occupy $\nMsgApp = \nMsg - \nMsgCntrl$ messages per update interval.

\paragraph{Overload Problem}
The state of the art with respect to the outlined application scenario supports at most $\nAgent = 5$ agents at an update interval of \SI{50}{\milli\second} when no application traffic is transmitted~\cite{baumann2019control,mager2019feedback}.
This is insufficient for many envisioned \cps applications requiring tens to hundreds of agents, ever shorter update intervals to realize more sophisticated control tasks, and the continuous collection of high-volume data streams, for example, to feed machine-learning models~\cite{Hayat2016,wang2016implementing,baumann2020manufacturing}.

The bottleneck is the limited network bandwidth \nMsg.
While advances in wireless technology can increase \nMsg, the required infrastructure costs may not be economically viable~\cite{frankston21consumer}.
Moreover, the traffic volumes of machine-to-machine communication to enable monitoring and control are expected to see annual growth rates of up to \SI{50}{\percent} over the next ten years~\cite{statista}, quickly outrunning any increase in~\nMsg.
We refer to a system as overloaded when the control traffic generated by periodic control exceeds the available control bandwidth \nMsgCntrl.
As a result, it becomes impossible to guarantee stability and achieve the desired control performance with periodic control methods.

\subsection{Related Work}

How to achieve high-performance control under limited communication resources has been widely studied.
However, as discussed below, most prior approaches cannot solve the overload problem.
A few theoretical control concepts can in principle address the problem, but none of these works considers the challenges of integrating control with a real network, neither wired nor wireless.

\paragraph{Event- and Self-Triggered Control}
Event-triggered control (ETC) and self-triggered control (STC) methods aim to efficiently use the limited communication bandwidth~\cite{heemels2012introduction,miskowicz2018event}.
To this end, they only let agents transmit control information when needed (\eg some error exceeds a threshold) instead of letting all agents transmit control information in every update interval as in standard periodic control.
However, while ETC and STC can reduce the control traffic \emph{on average} compared to periodic control, they cannot solve the overload problem: At any point in time, it can happen that more agents signal communication needs than the network can support, as illustrated in \figref{fig:overview}.
How to resolve such situations and provide stability guarantees under overload is an unsolved problem.

Further, using ETC, agents make communication decisions instantaneously, which leaves the communication system no time to reallocate unused bandwidth to other agents, wasting precious resources.
Using STC, an agent decides about the next time it needs to communicate at the current communication instant, so unused bandwidth can be reallocated.
However, there is no way to react between two communication instants; that is, the agent cannot react to unforeseen disturbances, which negatively affects control performance and stability.

\paragraph{Predictive Triggering}
Predictive triggering can handle disturbances by letting agents decide at every time step if they need to communicate some time in the future~\cite{trimpe2016predictive,trimpe2019resource}.
Moreover, Mastrangelo \etal extend predictive triggering toward non-binary communication decisions, where a priority measure based on the probability of exceeding a threshold is used to schedule communication~\cite{mastrangelo2019predictive}.
Unlike the binary communication decisions in ETC and STC, this approach can in principle address the overload problem, which is why we adopt it.
However, compared with all prior work on predictive triggering, we \emph{(i)} propose an improved priority measure that is efficiently computable on resource-constrained hardware, \emph{(ii)} address the challenges of integrating predictive triggering with a real wireless communication system, \emph{(iii)} conduct a formal stability analysis, and \emph{(iv)} validate our overall co-design on a real-world \cps testbed.

\paragraph{Contention Resolution}
Although it has been shown~\cite{mastrangelo2019predictive} that predictive triggering yields better performance than contention resolution~\cite{molin2011optimal,mamduhi2017error}, we discuss it here as an alternative theoretical concept that can in principle address the overload problem.
Besides the inability of some contention resolution algorithms to support heterogeneous agents~\cite{mamduhi2017error}, which is a common requirement in practice, none of the existing algorithms (see, \eg~\cite{ramesh2016performance,balaghi2018decentralized,mamduhi2017error,demirel2018deepcas,molin2011optimal}) has been integrated with a real network.
Instead, the algorithms are exclusively evaluated in simulation, making assumptions about a potential communication system that are not backed up through real-world experiments.

\paragraph{Online Scheduling}
Recently, a few distributed~\cite{Zhang2021,Modekurthy2019HART} and autonomous~\cite{Modekurthy2019} scheduling approaches for wireless control systems have been proposed.
The goal of these approaches is to adjust sampling periods and communication schedules in response to unexpected external disturbances and varying wireless link qualities.
Although their distributed operation resembles our scheduling approach and also shares the goal of adapting to external disturbances, the scheduling criterion and techniques are fundamentally different.
For instance, rather than adjusting the schedules to link quality changes, such changes are effectively accounted for by our synchronous transmission and network coding based communication system, thereby hiding them from the scheduler.

\begin{table}[!tb]
\caption{
    Comparison to prior practical co-designs of control and wireless communication that predict communication demands and have been validated using experiments on real physical systems and wireless networks.
    \capt{
        With fast physical systems we refer to the ability of a co-design to support update intervals of a few hundreds of milliseconds, which is typically required to control mechanical systems, such as a quadcopter swarm~\cite{Preiss2017}.
        Co-designs that can quickly react to disturbances are here defined as those that can spontaneously react based on current measurements in contrast to, \eg STC designs, which decide about their next communication instant at the current one and cannot react in-between.
    }
}
\label{tab:rel_work}
\begin{tabular}{cccccc}
\toprule
\multirow{2}{*}{Work} & Fast physical & Multi-hop & Quickly react to & Stability & Addresses \\
& systems & networks & disturbances & guarantees & overload \\
\midrule
Araujo \etal \cite{Araujo2014}  & \xmark & \xmark & \xmark & \cmark & \xmark \\
Saifullah \etal \cite{Saifullah2014} & \xmark & \cmark & \cmark & \xmark & \xmark \\
Santos \etal \cite{santos2015aperiodic}  & \cmark & \xmark & \xmark & \xmark & \xmark \\
Baumann \etal\cite{baumann2019control} & \cmark & \cmark & \xmark & \xmark & \xmark \\
\textbf{This work} & \cmark & \cmark & \cmark & \cmark & \cmark \\
\bottomrule
\end{tabular}
\end{table}

\paragraph{Practical Control-Communication Co-Designs}
\tabref{tab:rel_work} qualitatively compares our and prior practical co-designs that predict communication demands and validate the integration of control with wireless communication against the dynamics of real physical systems and real wireless networks.

Araujo \etal use STC to control a quadruple tank process over a single-hop network with update intervals of a few seconds~\cite{Araujo2014}.
Saifullah \etal present a multi-hop solution for power management in data centers, using update intervals of 20 seconds or longer~\cite{Saifullah2014}.
While their control design is not explicitly based on STC, it exhibits similar properties, including the inability to cope with overload.
The same holds for the co-designs by Santos \etal~\cite{santos2015aperiodic} and Baumann \etal~\cite{baumann2019control}.
Both employ an STC approach and demonstrate control of fast physical systems; however, only the solution by Baumann \etal supports control over multi-hop networks.
None of the works based on STC can spontaneously react to disturbances, and only~\cite{Araujo2014} provides stability guarantees.

We also note a few other recent control-communication co-designs that use STC~\cite{ma2018efficient,Ma2020} or ETC~\cite{bhatia2021control,trobinger2021wireless} to reduce communication and that have been evaluated on real wireless networks.
In contrast to our work, these co-designs target slow physical systems (\eg water distribution networks~\cite{bhatia2021control,trobinger2021wireless}) requiring update intervals on the order of seconds and have only been evaluated on simulated physical systems.
Most importantly, none of them addresses the overload problem.

In summary, the co-design proposed in this paper is the first to address the overload problem, while providing several other properties (see \tabref{tab:rel_work}) essential for emerging \cps applications.

\section{Overview of Co-Design Approach}
\label{sec:overview}

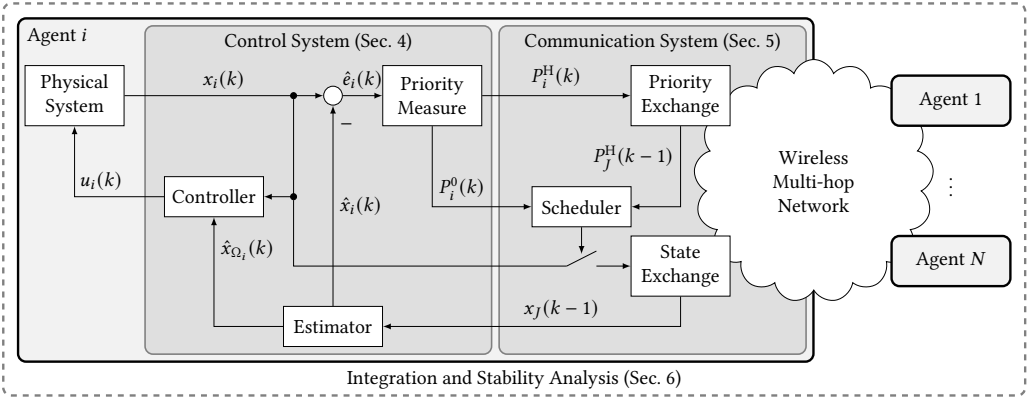
\begin{figure*}[!tb]
	\centering
	\newcommand*{\jname}{\ensuremath{J}}
	\newcommand*{\icomp}{\ensuremath{\bar I}}
	\scalebox{0.75}{%
\begin{tikzpicture}
	
\begin{scope}[every node/.append style={rectangle, draw=gray, thick, rounded corners, fill=lightgray!50!white, minimum height = 15.5em, text depth = 15em}]
	\node[minimum width=17.5em] (ctrl) {Control System (\secref{sec:control})};
	\node[minimum width=15.5em, right=0.3em of ctrl] (comm) {Communication System (\secref{sec:communication})};
\end{scope}

\node[rectangle, minimum height=2em, minimum width=5em, below right = 2.5em and 1em of ctrl.north west](sys_i){};
\node[block, fill=white,  minimum height=3em, align=center, left = 2em of sys_i](phys_sys){Physical\\System};
\node[block, fill=white, below = 3em of sys_i](ctrl_i){Controller};
\node[circ, fill=white, right = 3em of sys_i](err_sum){};
\node[block, fill=white, below = 10em of err_sum](pred_i){Estimator};
\node[block, fill=white, minimum height=3em, right=2em of err_sum, align=center](viol_prob){Priority\\ Measure};
\coordinate[right = 7.5em of viol_prob](prio_exc_west);
\node[block, fill=white, below right=3em and 2.5em of viol_prob](sched){Scheduler};
\coordinate[below=8.5em of prio_exc_west](state_exc_west);
\node[coord]at($(prio_exc_west)!0.5!(state_exc_west)$)(exc){};
\node[draw, fill=white, right=3.3em of exc, anchor=west, cloud, cloud puffs=17, cloud puff arc=140, aspect=1.0, inner ysep=2.3em, align=center](nw){Wireless\\ Multi-hop\\ Network};
\node[block, fill=white, minimum height=3em, anchor=west, align=center] at (prio_exc_west) (prio_exc){Priority\\ Exchange};
\node[block, fill=white, minimum height=3em, anchor=west, align=center] at (state_exc_west) (state_exc){State\\ Exchange};

\begin{scope}[every node/.append style={draw,very thick,rounded corners,fill=lightgray!20!white, minimum height=2.5em, minimum width=6em}]
	\node at ($(nw)+(30:8em)$) (agent1) {Agent $1$};
	\node at ($(nw)+(-30:8em)$) (agentN) {Agent \nAgent};
\end{scope}
\node at ($(agent1)!0.5!(agentN)$) {$\vdots$};

\coordinate (center) at ($(ctrl.east)!0.5!(comm.west)$);

\node[circ2, fill=black]at($(sys_i.east)!0.5!(err_sum.west)$)(branch_xi){};
\node[circ2, fill=black]at(branch_xi|-ctrl_i)(branch_xi_ctrl){};

\node[coord]at(sched|-state_exc)(switch_center){};

\draw[->] (phys_sys) -- node[midway, above]{$x_i(k)$} (err_sum);
\draw[->] (ctrl_i) -| node[pos=0.35, above]{$u_i(k)$} (phys_sys);
\draw[->] (branch_xi) |- (ctrl_i.east);
\draw[->] (pred_i) -| node[pos=0.85, right]{$\hat{x}_{\Omega_i}(k)$} (ctrl_i);
\draw[->] (pred_i) -- node[midway, right]{$\hat{x}_i(k)$} node[pos=0.9, right]{$-$} (err_sum);
\draw[->] (err_sum) -- node[midway, above]{$\hat{e}_i(k)$} (viol_prob);
\draw[->] (viol_prob) -- node[midway, above]{$P_i^\mathrm{\horizon}(k)$} (prio_exc);
\draw[->] (prio_exc) |- node[pos=0.2,left]{$P_{\jname}^\mathrm{\horizon}(k-1)$} (sched.east);
\draw[->] (viol_prob) |- node[pos=0.65,above]{$P_i^0(k)$} (sched);
\draw (branch_xi_ctrl) |- ($(switch_center)+(-0.75em,0em)$);
\draw[->] ($(switch_center)+(0.75em,0em)$) -- (state_exc);
\draw ($(switch_center)+(-0.75em,0em)$) -- +(1.5em, 0.75em);
\draw[->] (sched.south) -- ($(switch_center)+(0em,0.6em)$);
\draw[->] (state_exc.south) |- node[pos=0.7, above]{$x_\jname(k-1)$} (pred_i);

\begin{scope}[on background layer]
	\node[rectangle,draw,very thick,rounded corners, fill=lightgray!20!white, align=left, fit=(ctrl) (comm) (phys_sys)] (agenti) {};
	\node[anchor=north west, inner sep=0.5em] at (agenti.north west) {Agent $i$};
\end{scope}

\node[fit=(agent1) (agenti) (agentN)] (int_core) {};
\node[anchor = north, inner sep=0, outer sep=0] at (int_core.south) (int_text) {Integration and Stability Analysis (\secref{sec:integrationStability})};
\node[rectangle,draw=gray,very thick,rounded corners,dashed, fit=(int_core) (int_text)] {};

\end{tikzpicture}
}
	\caption{
		System architecture of our proposed co-design approach.
		\capt{
			The CPS consists of \nAgent heterogeneous agents working on a common task.
			Each agent controls a physical system.
			To coordinate their actions, all agents communicate through a wireless multi-hop network.
			$\Omega_i$ denotes the set of agents relevant for the control task,
			and \jname denotes all agents whose
			information has been successfully exchanged in the previous time step.
		}
	}
	\label{fig:predTrigFramework}
\end{figure*}

To overcome the overload problem, we propose a novel co-design approach that integrates control and communication at \emph{design time} (\ie prior to operation) and at \emph{run time} (\ie during operation).
The design of the communication system tames network imperfections to the extent possible, and the design of the control system takes the remaining imperfections in terms of message loss and delay into account.
During operation, the control system continuously reasons about the future need for communication of each of the \nAgent agents.
The communication system time-synchronizes the agents and adapts to the communication needs by dynamically assigning the \nMsgCntrl messages available for control traffic in each update interval to the agents with the highest need.
As illustrated in \figref{fig:overview}, our co-design approach ensures that the generated control traffic never exceeds \nMsgCntrl, and closed-loop stability can be provably guaranteed.

\figref{fig:predTrigFramework} shows the overall system architecture and the data flow between the building blocks inside each agent.
The novel control system we design consists of three building blocks:
\begin{enumerate}
	\item[\ref{block:controller}] The \emph{controller} performs the desired control task for the physical system connected to agent $i$.
	It locally stabilizes the individual system, while at the same time it exploits many-to-all communication capabilities to realize the distributed control task.
	\item[\ref{block:measure}] The \emph{priority measure} indicates an agent's future need for communication to meet the required control performance.
	The measure makes it possible to compare communication needs independent of the specific control tasks and for different physical dynamics of heterogeneous~agents.
	\item[\ref{block:estimator}] The \emph{estimator} provides the controller and the priority measure with estimates of the current state of all agents based on the available information received over the network.
\end{enumerate}

We jointly design the control system with a novel wireless communication system, which consists of the following three building blocks:
\begin{enumerate}
	\item[\ref{block:stateExchange}] The \emph{state exchange} distributes, in each update interval $k$, \nMsgCntrl control messages over the wireless multi-hop network to all \nAgent agents.
	The set of at most \nMsgCntrl agents that are allowed to transmit their messages may change from one update interval to the next.
	Because of unavoidable packet losses when communicating over wireless channels, it is possible that an agent receives only a subset of the transmitted messages; however, our experiments and empirical evidence from prior work~\cite{herrmann2018mixer} show that the probability is extremely low (\ie less than \SI{0.01}{\percent}).
	\item[\ref{block:priorityExchange}] The \emph{priority exchange} builds on the mechanism from \ref{block:stateExchange} to distribute each agent's current priority measure among all \nAgent agents in the system.
	\item[\ref{block:scheduler}] The distributed \emph{scheduler} dynamically assigns the \nMsgCntrl control messages available in each update interval $k$ to the agents with the highest priorities.
	Every agent computes the global communication schedule locally.
	If an agent does not have full information about the latest priorities, it is guaranteed that it does not disturb the communication of the other agents.
\end{enumerate}

\paragraph{Road Map} In Secs.~\ref{sec:control} and \ref{sec:communication}, we detail the co-design of the control and wireless communication systems along the different building blocks \ref{block:controller} to \ref{block:scheduler}.
Then, we describe the careful integration of control and communication, and formally prove stability of the entire \cps for physical systems with stochastic \lti dynamics in \secref{sec:integrationStability}.
Finally, \secref{sec:eval} complements our analytical results through real-world experiments on a \cps testbed.

\section{Predictive Triggering and Control~System}
\label{sec:control}

In this section, we present the design of the control system.
We first provide a model of the physical system and then show how the different building blocks of \figref{fig:predTrigFramework} are designed.
In particular, we introduce the controller (\ref{block:controller}) and the estimator (\ref{block:estimator}) in \secref{sec:ctrl_arch}, and derive the priority measure (\ref{block:measure}) in \secref{sec:viol_prob}.

\subsection{Control System Model}
\label{sec:sys_model}

We consider a collection of $\nAgent$ stochastic LTI systems.
The dynamics of the physical system $i$ (\cf \figref{fig:predTrigFramework}) of this collection are given by
\begin{align}
\label{eqn:gen_sys_lin_dyn}
x_i(k+1) = A_ix_i(k) + B_iu_i(k) + v_i(k),
\end{align}
with discrete time index $k\in\mathbb{N}$, state $x_i(k)\in\R^n$, input $u_i(k)\in\R^m$, $A_i$ and $B_i$ matrices of appropriate dimensions, and process noise $v_i(k)\in\R^n$, which we assume to follow a normal distribution with zero mean and variance $\Sigma_\mathrm{v_i}$.
Further, we assume the noise processes of individual agents to be uncorrelated, \ie $\E[v_i(k)v_j(k)]=0$ for all $i\neq j$, and $k$.
The time needed by the communication system to distribute up to $\nMsg$ messages determines the update interval, which represents the fixed length of a discrete time step $k$.

We consider a distributed control problem, \ie the input $u_i(k)$ of agent $i$ at time $k$ depends not only on its own state but also on the state of (possibly all) other agents.
Assuming static linear feedback, the control law for agent $i$ becomes
\begin{align}
\label{eqn:gen_ctrl_law}
u_i(k) = F_{ii}x_i(k) + \sum_{j\in\Omega_i}F_{ij}\hat{x}_{ij}(k),
\end{align}
with feedback matrices $F_{ii},F_{ij}\in\R^{m\times n}$, $\Omega_i$ the set of all agents whose state is relevant to agent $i$ (that is, those for which $F_{ij}$ is non-zero), and $\hat{x}_{ij}(k)$ agent $i$'s estimate of agent $j$'s state.
The estimate of other agents' states is based on the information transmitted over the wireless network (\cf \figref{fig:predTrigFramework}).
However, we account for the case that only $\nMsgCntrl<\nAgent$ agents can transmit information in every update interval $k$, \ie an agent $i$ may not receive state updates from all other agents $j\in\Omega_i$ at time step $k$.

Due to the clearer presentation, we assume throughout the paper that all agents operate with the same update interval $k$.
Nevertheless, our design also supports heterogeneous agents with different update intervals as long as the communication system can provide the required update interval.
Intuitively, agents that are stable at longer update intervals, \ie agents with slower dynamics, can also be stabilized at shorter update intervals.
We evaluate different types of systems in \secref{sec:eval}, where the heterogeneity is reflected in different communication needs.

\subsection{Control Architecture}
\label{sec:ctrl_arch}

Based on the model~\eqref{eqn:gen_sys_lin_dyn}, we now introduce the estimator (\ref{block:estimator}) that lets agent $i$ estimate agent $j$'s current state.
Afterward, we discuss
the feedback matrices $F_{ii}$ and $F_{ij}$ of the controller (\ref{block:controller}).

\paragraph{State Estimation}
As stated in the previous section, agent $i$ needs information of the other agents' states to compute its control input.
However, only $\nMsgCntrl<\nAgent$ agents can transmit their state information over the communication network.
Thus, for all agents from which agent $i$ did not receive an update, it needs to estimate the current state based on the data it received so far.
Further, information sent over the network is delayed by one time step and subject to message loss.
An estimate of agent $j$'s current state, compensating sporadic message exchange, transmission delays, and message loss, can be obtained via
\begin{align}
\label{eqn:est_state_j}
\begin{split}
\hat{x}_{ij}(k) = \begin{cases}
(A_j+B_jF_{jj})x_j(k-1) + B_j\sum_{\ell\in\Omega_j}F_{j\ell}\hat{x}_{i\ell}(k-1) &\text{ if } \kappa_j(k)=1\land\phi_j(k)=1\\
(A_j+B_jF_{jj})\hat{x}_{ij}(k-1) + B_j\sum_{\ell\in\Omega_j}F_{j\ell}\hat{x}_{i\ell}(k-1) &\text{ otherwise,}%
\end{cases}
\end{split}
\end{align}
where $\kappa_j(k)$ denotes, whether ($\kappa_j(k)=1$) or not ($\kappa_j(k)=0$) agent $j$ transmitted its state in the current round.
Possible message loss is captured by $\phi_j(k)$ ($\phi_j(k)=0$ if the message was lost and $\phi_j(k)=1$ otherwise).
The intuition behind~\eqref{eqn:est_state_j} is as follows: in case of successful communication, agent $i$ receives agent $j$'s state.
To compensate for the transmission delay, agent $i$ makes a one step ahead prediction.
In case of no communication (either intentionally or due to message loss), it propagates its last estimate of agent $j$'s state ($\hat{x}_{ij}$).

\paragraph{Control Objective}
While we consider distributed control, the individual agents may be unstable, \ie the $A_i$ in~\eqref{eqn:gen_sys_lin_dyn} may have eigenvalues with absolute value greater than one.
Thus, each agent needs to locally stabilize itself \emph{and} solve the distributed control task.
To make the controller design precise, we consider synchronization as an example of distributed control.
We here understand synchronization as trying to have the states of all agents evolve as close as possible.
That is, we want to keep the error $e_{ij}(k)\coloneqq x_i(k)-x_j(k)$ between the states of any two agents $i$ and $j$ small.
Alternatively, it is also possible to only synchronize a subset of the states.
Synchronizing the states (or parts thereof) of multiple agents is a frequently considered problem setting in distributed control and also known under the terms consensus or coordination~\cite{lunze2012synchronization}.

\paragraph{Controller Design}
For ease of presentation, we assume a two-agent setting in the following. However, as we also show in the experimental evaluation, the design straightforwardly extends to more agents.
Based on the control objective, we start by formulating a cost function
\begin{align}
	\label{eqn:lqr_cost}
	\begin{split}
	J = \lim_{K\to\infty}\frac{1}{K}\E & \left[\sum_{k=0}^{K-1}\sum_{i=1}^2\left(x_i(k)^\transp Q_ix_i(k) + u_i(k)^\transp R_i u_i(k)\right)\right.\\
	&\left.+(x_1(k)-x_2(k))^\transp Q_\mathrm{sync}(x_1(k)-x_2(k))\vphantom{\sum_{k=0}^{K-1}}\right],
	\end{split}
\end{align}
with positive semidefinite matrices $Q_i$ and $Q_\mathrm{sync}$, which penalize deviations from the equilibrium state and from the synchronization objective, and positive definite matrix $R_i$, which penalizes high control inputs.
Since both agents have an estimate of the other agent's current state, obtained from~\eqref{eqn:est_state_j}, we can rewrite the term in the summations in~\eqref{eqn:lqr_cost}, using the augmented state $\tilde{x}(k) = [x_1(k),x_2(k)]^\transp$ and augmented input $\tilde{u}(k) = [u_1(k),u_2(k)]^\transp$, as
\begin{align}
\label{eqn:lqr_cost_augmented}
\begin{split}
&\tilde{x}^\mathrm{T}(k)
\begin{pmatrix}
Q_1+Q_\text{sync}&-Q_\text{sync}\\
-Q_\text{sync}&Q_2+Q_\text{sync}
\end{pmatrix}
\tilde{x}(k)
 +\tilde{u}^\mathrm{T}(k)
 \begin{pmatrix}
R_1&0\\
0&R_2
\end{pmatrix}
 \tilde{u}(k).
 \end{split}
\end{align}
We now seek to find the optimal feedback controller that minimizes the cost function $J$.
Since the cost function $J$ puts an emphasis on the stability of each individual agent (through $Q_i$) and on the synchronization objective (through $Q_\mathrm{sync}$), solving the optimization problem results in finding a trade-off between both objectives.
This trade-off can be influenced through the choice of $Q_i$ and $Q_\mathrm{sync}$.
Using the formulation in~\eqref{eqn:lqr_cost_augmented}, this problem can be solved using standard tools from linear optimal control~\cite{Anderson2007} and yields a static linear feedback controller as in~\eqref{eqn:gen_ctrl_law}.
Note that this implies that the agents do not need to solve the optimization problem at run time.
Instead, the optimal controller can be computed offline before the system operation commences.

\subsection{Priority Measure}
\label{sec:viol_prob}

We now discuss how to derive a priority measure (\ref{block:measure})  that enables the communication system to allocate the \nMsgCntrl available control messages to the agents with the highest needs.
If state estimates~\eqref{eqn:est_state_j} were perfect, there would be no need for any communication.
However, since we consider noisy dynamics (\cf~\eqref{eqn:gen_sys_lin_dyn}), the estimates will, over time, start to deviate from the true state.
Thus, we seek to transmit information as soon as the error between estimated and true state becomes too large.
Therefore, agent $i$ computes the same\footnote{Estimates may diverge in case of message loss. We discard impact of message loss for the triggering design and analyze its impact in the stability analysis in \secref{sec:stab_analysis}.} estimate other agents have of its state and derives the estimation error $\hat{e}_i(k)\coloneqq x_i(k) - \hat{x}_{ii}(k)$.
Ideally, we would now like agent $i$ to transmit its state whenever $\hat{e}_i(k)$ exceeds some threshold.
However, we need to \emph{(i)} announce communication needs in advance, \ie we cannot use the current $\hat{e}_i(k)$ to decide about communication, and \emph{(ii)} we need a measure that quantifies agent $i$'s communication demand in a generic way, instead of a binary decision as it is the case with ETC and STC.

We first introduce the measure that we use to quantify whether the estimation error is ``too large.''
This measure is the squared Mahalanobis distance~\cite{mahalanobis1936generalised} of the estimation error from the origin,
\begin{align}
\label{eqn:mahalanobis_distance}
\mah{\hat{e}_i(k)}^2 \coloneqq \E[(\hat{e}_i(k))]^\transp\Var[\hat{e}_i(k)]^{-1}\E[(\hat{e}_i(k))].
\end{align}
The distance between an observation and a random variable is in one dimensional settings often measured by assessing how many standard deviations it is away from the mean.
The Mahalanobis distance extends this idea to multiple dimensions.
For $\Var[\hat{e}_i(k)] = I_n$, with $I_n$ the $n\times n$ identity matrix, the Mahalanobis distance reduces to the Euclidean distance.

Directly using the squared Mahalanobis distance to schedule agents would yield a viable triggering strategy for homogeneous agents.
Yet, for heterogeneous agents, the estimation errors may be in different orders of magnitude and, thus, not comparable.
Further, instead of an instantaneous decision, we need to announce communication needs in advance.
As shown in \figref{fig:predTrigFramework}, at time step $k$, each agent receives states that were sent at $k-1$, \ie there is a delay of one time step.
Thus, we consider the probability of the squared Mahalanobis distance exceeding a predefined threshold $\delta_i$ in $\horizon+1$ time steps.
The parameter $\horizon$ denotes how many time steps the communication system needs to reschedule resources.
To compute the Mahalanobis distance, we need the expected value and the variance of the error $\hat{e}_i(k+\horizon+1)$.
Since the error is the difference between current state and estimated state, its expected value and variance can be calculated using expected value and variance of the state at time $k+\horizon+1$.
Assuming no communication between $k$ and $k+\horizon+1$ and exploiting that the noise sequences are uncorrelated, those are given by
\begin{subequations}
\begin{align}
\E[x_i(k+\mathrm{\horizon}+1)] &= A_i^{\mathrm{\horizon}+1}x_i(k) + \sum_{s=0}^\mathrm{\horizon} A_i^sB_iu_i(k-1-s)\\
\label{eqn:mah_dist_var}
\Var[x_i(k+\mathrm{\horizon}+1)] &= \sum_{s=0}^\mathrm{\horizon} A_i^s\Var[v_i(k-1-s)](A_i^s)^\transp,
\end{align}
\end{subequations}
where $\Var[v_i(k)]=\Sigma_\mathrm{v_i}$ for all $k$.
Note that~\eqref{eqn:mah_dist_var} is constant, \ie it does not depend on current data.
Thus, the inverse needed in~\eqref{eqn:mahalanobis_distance} can be computed \emph{a priori}, leaving only matrix multiplications that need to be done at run time.

We define our priority measure as
\begin{align}
\label{eqn:viol_prob_analytic}
P_i^\mathrm{\horizon}(k) = \frac{\gamma\left(\frac{n}{2},\frac{\delta_i-\mah{\hat{e}_i(k+\mathrm{\horizon}+1)}^2}{2}\right)}{\Gamma\left(\frac{n}{2}\right)},
\end{align}
with $\gamma(\cdot)$ the lower incomplete gamma function, $\Gamma(\cdot)$ the gamma function, and $n$ the dimensionality of the physical system.
This priority measure is the closed-form expression of the probability that the squared Mahalanobis distance of a normally distributed random variable with zero mean exceeds $\delta_i-\mah{\hat{e}_i(k+\mathrm{\horizon}+1)}^2$~\cite{gallego2013mahalanobis}.
Technically, it is therefore the probability that the Mahalanobis distance of the estimation error will either exceed $\delta_i$ or shrink by more than $\delta_i-\mah{\hat{e}_i(k+\mathrm{\horizon}+1)}^2$ and not exactly the sought-after probability.
However, this does not change the qualitative result (systems with higher probability of exceeding $\delta_i$ will have a higher priority measure) and, in that way, allows to rank systems by priority.
Further, we do not need the exact probability of a system exceeding the threshold for any of our technical results.
Note that~\eqref{eqn:viol_prob_analytic} can only be evaluated if the estimation error is smaller than the threshold.
If the threshold is already exceeded, we can reverse the arguments of $\gamma(\cdot)$ to compute the probability that the estimation error falls again below the threshold.

\section{Adaptive Communication System}
\label{sec:communication}
The second part of our \cps co-design is the communication system.
As outlined in \secref{sec:overview}, we consider the problem that the communication bandwidth is not sufficient to meet the demands of application and periodic control.
We have \nMsgCntrl control messages for \nAgent agents with $\nAgent > \nMsgCntrl$, and a potentially high number of application messages \nMsgApp. %
To quickly and reliably exchange all $\nMsg = \nMsgApp + \nMsgCntrl$ messages (\ref{block:stateExchange}), we design a wireless protocol based on the communication primitive Mixer~\cite{herrmann2018mixer}.
\secref{sec:scalability} details how we increase the scalability of Mixer to support larger multi-agent systems, whereas \secref{sec:triggering_support} describes novel extensions to support distributed control based on our predictive triggering approach.
Specifically, we conceive a scheme for the efficient and reliable exchange of priorities among agents (\ref{block:priorityExchange}), which we then leverage for our distributed scheduling design (\ref{block:scheduler}) to adaptively allocate messages to agents with the highest communication needs.

\begin{figure}[!tb]
	\centering
	\includegraphics[width=\linewidth]{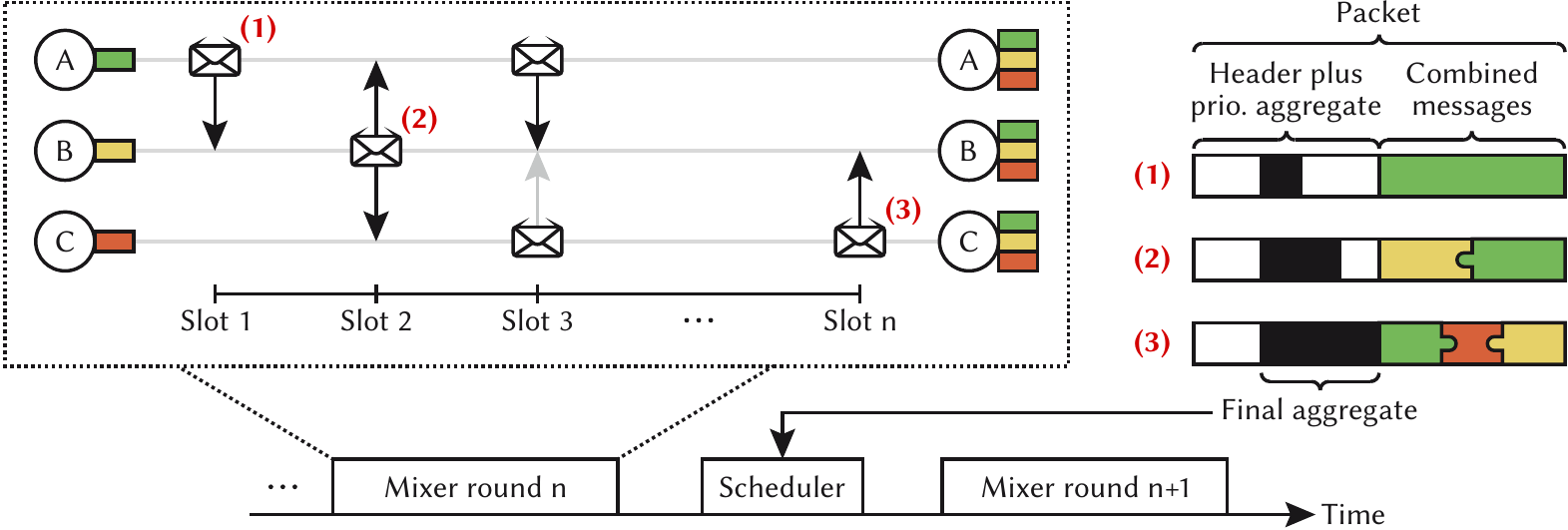}
	\caption{
		Overview of the wireless communication protocol.
		\capt{
			Mixer realizes the many-to-all message exchange in a network (colors represent messages).
			A Mixer round consists of synchronized time slots, where in each slot nodes (A, B, C) decide whether and what to transmit.
			Nodes build packets by randomly combining already received information, \eg node B sends a combination of yellow and green in slot 2.
			Priorities are exchanged and aggregated as part of Mixer's header.
			At the end of the round all nodes have all messages.
			The scheduler runs locally on each node between rounds and uses the final aggregate to determine the global message allocation for the next round.
	}}
	\label{fig:comProtocol}
\end{figure}

\subsection{Communication Support for Scalable Multi-agent Systems}
\label{sec:scalability}
In prior approaches capable of distributed control over wireless multi-hop networks~\cite{mager2019feedback,baumann2019fast,baumann2019control} the latency to exchange \nMsg messages scales as $O(\nMsg T)$, with $T$ the time required for the dissemination of a single message.
With this scaling behavior the number of messages that can be exchanged in a given update interval is rather small.
For instance, the experiments in \cite{baumann2019fast} support a maximum of five agents with a \SI{50}{\milli\second} update interval an no application traffic.
To significantly improve on this number, we adopt and enhance the recently proposed Mixer many-to-all communication primitive~\cite{herrmann2018mixer}, which will form the basis of our wireless protocol.
Mixer approaches the order-optimal scaling of $O(\nMsg + T)$, resulting in a significant improvement over the $O(\nMsg T)$ scaling for larger \nMsg and $T$.

Below we give a brief overview of how Mixer works and describe how we improve its performance by
\emph{(i)} porting it from IEEE~802.15.4 to the faster Bluetooth Low Energy (BLE) physical layer to reduce $T$, and
\emph{(ii)} making use of the most recent network state information to communicate more effectively.
\secref{sec:triggering_support} details how we enhance Mixer's functionality for a systematic co-design with the predictive triggering based control system.

\paragraph{Mixer Protocol Operation}
Mixer is a fast, efficient, and reliable communication primitive that disseminates a set of \nMsg messages among all nodes in a network by using synchronous transmissions~\cite{zimmerling2021} and random linear network coding~\cite{ho2006}.
As shown in \figref{fig:comProtocol}, a Mixer \emph{round} consists of synchronized time \emph{slots}, and in each slot the nodes decide independently whether to transmit or listen for a \emph{packet}.
Mixer packets consist of a protocol header and a combination of 1 to \nMsg messages as \emph{payload}.
Such a combination is created by choosing a random subset of already received packets and combining their payloads via XOR.
This coding-based approach is the main reason for the improved scaling behavior, as it makes better use of the communication channel.
The slotted communication continues until all \nMsg messages ($\nMsg = 3$ for the example in \figref{fig:comProtocol}) can be decoded by all nodes, or the predefined maximum duration of a round has elapsed.

Synchronous transmissions exploit the capture effect~\cite{Ferrari2011} which describes that, under certain conditions, multiple transmissions can overlap (collide) at a receiver, but one of the packets is likely to be received successfully.
This is comparable to a conversation between two people while at the same time other people are talking from a greater distance.
The voice signals interfere in the air, however, it is likely possible to understand the person nearby.
For example, in slot 3 in \figref{fig:comProtocol} node B successfully receives the packet from node A despite the concurrent transmission of node C as a result of the capture effect.

\begin{figure}[!tb]
	\centering
	\input{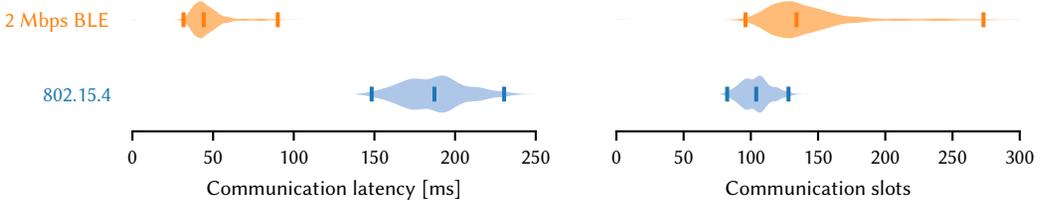}
	\caption{
		Distribution of communication latency using Mixer with the 802.15.4 and \SI{2}{Mbps} BLE physical layer.
		\capt{
			Although communication using BLE requires more slots, because of the lower robustness of this physical layer, the higher data rate reduces the length of each slot and thus the resulting latency by \num{4.2}$\times$ (median).
			The strokes mark 1st, 50th (median), and 99th percentiles.
	}}
	\label{fig:bleVs154}
\end{figure}

\paragraph{Mixer Using Bluetooth Low Energy}
The original Mixer implementation targets IEEE 802.15.4-compliant low-power RF transceivers with a limited data rate of \SI{250}{kbps}.
Another physical layer (PHY) in the low-power wireless domain is Bluetooth Low Energy (BLE), which supports data rates of up to \SI{2}{Mbps} in the Bluetooth 5 standard, which is \num{8}$\times$ faster than 802.15.4.
To benefit from the faster PHY, we implemented Mixer using BLE on the Nordic Semiconductor nRF52840 platform.

This platform also supports the 802.15.4 PHY, which makes it easier to compare the end-to-end performance of both variants.
For this purpose, we run experiments involving several thousand Mixer rounds on a 31-node testbed at Graz University of Technology~\cite{schuss17competition}.
The nodes were deployed across several rooms and hallways in an office building.
With a transmit power of \SI{8}{dBm}, the nodes form networks with a diameter of 2 hops (802.15.4) and 3 hops (BLE).
Nodes exchange messages with a size of \SI{18}{bytes} in an all-to-all fashion, \ie in every round all nodes start with one message ($\nMsg = 31$) that should be disseminated to all other nodes.

The left plot in \figref{fig:bleVs154} shows the latency distribution of both Mixer variants.
We see that our BLE implementation of Mixer outperforms the original version in terms of communication latency by \num{2.5}--\num{4.7}$\times$ with a median of \num{4.2}$\times$.
The main reason why the theoretical speed-up of \num{8}$\times$ is unattainable in practice are differences in the robustness of the two physical layers:
while the 802.15.4 PHY features a spreading code, the \SI{2}{Mbps} BLE mode is uncoded and thus more susceptible to concurrent transmissions induced interference.
Consequently, the number of slots needed with BLE to exchange \nMsg messages increases compared to 802.15.4, as visible in the right plot of \figref{fig:bleVs154}.
At the same time, the higher data rate reduces the length of each slot, which leads to the overall latency improvements.

\begin{figure}[!tb]
	\centering
	\input{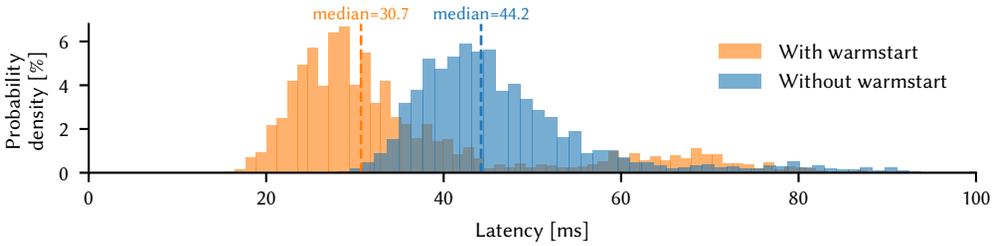}
	\caption{
		Histogram of Mixer's latency using the \SI{2}{Mbps} BLE PHY with and without warmstart.
		\capt{
			Warmstarting nodes helps them communicate more effectively at the beginning of a round and reduces latency by \SI{31}{\percent} (median).
	}}
	\label{fig:warmstart}
\end{figure}

\paragraph{Warmstart}
Wireless control with fast physical systems requires short update intervals.
For instance, drones need to exchange their positions every \SI{100}{ms} to fly in formation or to avoid crashes~\cite{chung16}.
In general and from a network perspective, the topology changes continuously in such dynamic scenarios.
However, viewed from one Mixer round to the next, the topology changes are relatively small due to the short update interval.
We leverage this observation to further improve Mixer's performance without sacrificing communication reliability.

Mixer rounds are independent of each other.
Each node has no information about its neighborhood at the beginning of a round.
This changes during a round with each packet received, allowing better and better transmit decisions to be made, and thus the message exchange to be finished more quickly.
Instead of acquiring this information from scratch in each round, we extend Mixer to allow neighborhood information to carry over from one round to the next.
Such \emph{warmstarted} nodes can communicate more effectively right from the beginning of a round.
Experiments with the same testbed setup as before show that warmstarting nodes reduces latency by 12--\SI{40}{\percent}, with a median of \SI{31}{\percent} (see \figref{fig:warmstart}). %

\subsection{Support for Predictive Triggering}
\label{sec:triggering_support}

Having established an efficient and reliable communication support for wireless control, we now turn to the core co-design questions from the perspective of the communication system:
\begin{enumerate}
	\item[\emph{(i)}] How to efficiently exchange the priorities computed by each agent within a Mixer round?
	\item[\emph{(ii)}] How to schedule communication based on the priorities such that the network bandwidth is used in the most effective way by allocating messages to the agents with the highest need?
	\item[\emph{(iii)}] How to communicate efficiently when facing varying communication demands?
\end{enumerate}
We propose novel Mixer protocol functionality to answer these three questions, as discussed next.

\paragraph{Efficient Exchange of Priorities}
With respect to their storage format the priorities $P_i^\mathrm{\horizon}(k)$ are generally much smaller in size than the full state information $x_i(k)$.
The key idea is to exploit this size advantage and use a part of the available network bandwidth for the exchange of all priorities in each Mixer round, to enable using the remaining network bandwidth in the most effective way.

The problem with this is that Mixer's coding operations require that all messages exchanged in a round must have the same size, thus, transmitting priorities and state information as separate messages would completely eliminate the size advantage of priorities.
Instead, we extend Mixer's packet header and reserve space for a priority aggregate, as shown in \figref{fig:comProtocol}.
At the beginning of a round, each agent knows only its own priority.
During a round, agents aggregate priorities from all received packets and embed the current aggregate into the packets they transmit.
Since the aggregate is part of every packet, it distributes independently of the \nMsg individual messages and propagates quickly and reliably through the network.

To minimize the overhead introduced by the priority aggregate, a compact representation is needed.
Importantly, the aggregate must contain information about the \nMsgCntrl agents with the highest priority and whether all agents contributed to the aggregate, as required by the distributed scheduling algorithm described below.
We propose two representations that satisfy these requirements.
The first option is to concatenate all $N$ priorities in the aggregate.
With \aggsize being the number of bits used for each priority value, the required number of bytes $S$ for the aggregate of \nAgent priority values is given by
\begin{subequations}
	\label{eqn:aggregate_size}
	\begin{equation}
		S = \lceil \nAgent \aggsize / 8 \rceil.
	\end{equation}
Alternatively, if $\nAgent \gg \nMsgCntrl$, we achieve a more compact representation using
	\begin{equation}
		S = \lceil (\nMsgCntrl \aggsize + \nMsgCntrl \lceil \log_2 \nAgent \rceil + \nAgent) / 8 \rceil,
	\end{equation}
\end{subequations}
which includes the \nMsgCntrl highest priorities ($\nMsgCntrl \aggsize$), together with the corresponding agent IDs ($\nMsgCntrl \lceil \log_2 \nAgent \rceil$), and one bit per node to verify that all agents have contributed ($N$).
Based on the concrete scenario parameters (\ie \aggsize, \nMsgCntrl, and \nAgent), our system automatically choses the most compact representation.

\paragraph{Distributed Scheduling}
A schedule in our communication protocol is the mapping of agents to the \nMsgCntrl available control messages, hence, it describes which agents are allowed to send a control message in the next round.
Agents derive the schedule based on the final priority aggregate to which all agents have contributed, by selecting the \nMsgCntrl agents with the highest need to send their state information.
More formally, we have that agent $i$ can send a message if its priority $P_i^\mathrm{\horizon}(k) \in \mathcal{P}_\mathrm{L}$, where $\mathcal{P}_\mathrm{L}$ denotes the set of the \nMsgCntrl highest priorities
, \emph{and} if $P_i^\mathrm{\horizon}(k)>P_\delta$ for some user-defined threshold $P_\delta$.
How we deal with equality of priorities, a result of using quantization in a real implementation, is described in \secref{sec:scenario}.

As described in \secref{sec:viol_prob}, the scheduling horizon \horizon depends on the number of time steps the communication system needs to reschedule resources.
In our system the priority aggregate always determines the schedule for the next round, so $\horizon = 1$.
In the rare event that agent $i$ does not have the final aggregate at the end of a round, it is not guaranteed that it can compute the correct global schedule and therefore the agent must not send a message.
Nevertheless, the agent participates in the communication round like all other agents that have no message assigned to them.
However, if $P_i^\mathrm{\horizon}(k) \in \mathcal{P}_\mathrm{L}$, the message assigned to agent $i$ is not used in the next round.
Since messages and priorities spread independently of each other, message loss has little to no effect on the final priority aggregate.
In our experiments, we did not record a single priority loss and only one message loss, which shows the reliability of our communication system.

\paragraph{Reducing Communication Costs}
A Mixer round ends when all nodes have received all \nMsg messages or the maximum duration of a round is reached.
Since \nMsg is a parameter set at compile time, nodes expect the same number of messages in every round, which suits periodic traffic.
If the communication demand varies from round to round, then \nMsg determines the maximum number of messages.
Since nodes do not finish the round before receiving \nMsg messages, the difference to the actual number of messages has to be filled with empty messages.
As a result, the original Mixer~\cite{herrmann2018mixer} cannot translate a lower communication demand into energy savings.%

To achieve energy savings if less than \nMsg messages need to be sent in a round, we extend Mixer so that nodes can mark unused messages instead of sending empty messages.
The idea is similar to the way the priority aggregate works.
We aggregate the information about unused messages and quickly distribute them in the network independent of individual messages.
To realize this, we can reuse a certain part of Mixer's packet header (\ie the InfoVector, see~\cite{herrmann2018mixer}) and switch each slot between its original content and the information about unused messages, thus, adding no further overhead.

\begin{figure}[!tb]
	\centering
	\input{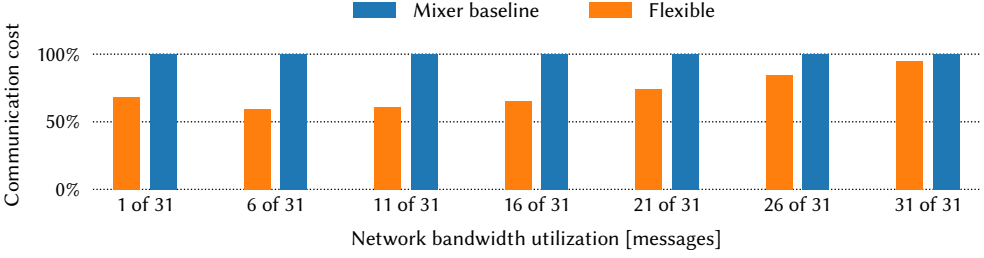}
	\caption{
		Communication costs at different levels of network bandwidth utilization.
		\capt{
			The original Mixer (baseline) cannot convert lower bandwidth utilization into energy savings.
			Enabling nodes to mark unused messages (flexible) saves up to \SI{40}{\percent} of the communication costs.
	}}
	\label{fig:weak_msgs}
\end{figure}

We evaluate the savings in terms of communication costs enabled by this modification using the experimental setup from \secref{sec:scalability}.
Now, messages also contain the priority aggregate, and we enable the warmstart feature.
We determine the communication costs by measuring the average radio-on time across all nodes during a communication round.
\figref{fig:weak_msgs} plots for the original Mixer baseline and our approach, labeled \emph{Flexible}, the resulting communication costs when different fractions of the available network bandwidth are used, from 1 used message out of 31 possible messages up to 31 used messages out of 31 possible messages.
We can see that, by marking unused messages, our Flexible approach achieves significant savings of up to \SI{40}{\percent} compared with the original Mixer baseline.
As expected, the savings increase as the network bandwidth utilization decreases.

The possibility to save energy can also be taken into account by the scheduling algorithm.
Since the communication need is announced one round in advance ($\horizon = 1$), the estimation error $\hat{e}$ can also decrease until the next time step, whereby allocated messages might no longer be needed.
For this, every agent that is allocated a message in the schedule derives $P_i^0(k)$ (as shown in \figref{fig:predTrigFramework}) by setting $\horizon = 0$ in~\eqref{eqn:mahalanobis_distance} and~\eqref{eqn:viol_prob_analytic}.
If $P_i^0(k)<P_\delta$, agent $i$ can skip the transmission and mark the message as unused to save energy.
The combined scheduling rule can be summarized as
\begin{equation}
\label{eqn:comm_dec}
\kappa_i(k+\mathrm{\horizon}+1) = 1 \iff P_i^\mathrm{\horizon}(k)\in\mathcal{P}_\mathrm{L} \land P_i^\mathrm{\horizon}(k) > P_\delta \land P_i^0(k+\mathrm{\horizon}) > P_\delta.
\end{equation}
Overall, this mechanism keeps the energy costs incurred by communication in proportion to the communication demand.

\section{Integration and Stability Analysis}
\label{sec:integrationStability}

The integration of our co-designed control and communication systems with respect to their timely interaction and execution is key to meet \cps application requirements.
We briefly characterize implementation and timing aspects of our wireless embedded system in \secref{sec:integration}, followed by a formal stability analysis of our integrated system in \secref{sec:stab_analysis}.

\subsection{Implementation Aspects}
\label{sec:integration}

\paragraph{Hardware Platform}
We leverage a dual processor platform (\dpp) that integrates an application processor (\ap) and a communication processor (\cp) together with the processor interconnect Bolt~\cite{Sutton2015a}.
The \ap executes all application tasks (sensing, actuation, control) while the \cp performs all communication tasks.
Both processors run in parallel, and the interconnect permits data exchange within formally verified timing bounds~\cite{Sutton2015a}.
\ap and \cp use ARM Cortex-M4F CPU cores running at \SI{48}{MHz} (MSP432P401R) and \SI{64}{MHz} (nRF52840), respectively.
The \cp features a low-power RF transceiver that implements the IEEE 802.15.4 and BLE physical layers.

\paragraph{Time Synchronization}
We use the start of a communication round as a global reference time, which is estimated by all \cp{}s in the network during communication.
Each \cp propagates this reference time to its respective \ap over a dedicated connection between GPIO pins.
By aligning all application tasks to the global reference time, similar to~\cite{mager2019feedback}, we ensure that the end-to-end delay and jitter among application tasks (\ie from sensing through control to actuation) is minimized~\cite{jacob2020}.
The slightly different estimates of the reference time by the \cp{}s lead to jitter, which is, however, upper bounded to the nominal length of a communication slot due to the design of Mixer.
For our BLE implementation and typical payload sizes, the jitter is on the order of a few hundred microseconds.

\subsection{Stability Analysis}
\label{sec:stab_analysis}

While the individual agents can locally stabilize themselves, they are coupled through a common control objective.
Introducing couplings can destabilize otherwise stable subsystems~\cite[Ch.~8]{lunze1992feedback}, even under periodic communication.
Moreover, we also consider larger CPS where the limited number of control messages leads to an overloaded system.
All these aspects imply that stability of the overall system does not follow straightforwardly from the stability of individual systems and, thus, needs to be analyzed.
For the stability analysis, we make the following assumptions.

\paragraph{Jitter}
The system model~\eqref{eqn:gen_sys_lin_dyn} implicitly assumes that the time step $k\to k+1$ (\ie the update interval) is constant, implying that jitter is neglected.
One time step corresponds to the time between the start of consecutive communication rounds.
The \cps scenarios described in \secref{sec:intro} use update intervals of tens to hundreds of milliseconds.
As a rule of thumb, jitter that is below \SI{10}{\percent} of the nominal update interval does not need to be compensated for~\cite[p.~48]{cervin2003integrated}.
As discussed in \secref{sec:integration}, the jitter of our \cps architecture is way below that threshold, justifying our assumption.

\paragraph{Message Loss}
For the following analysis, we assume that message loss is independent and identically distributed (\iid).
Given the extremely high reliability of Mixer with hardly any message losses during hundreds of hours of indoor and outdoor experiments~\footnote{In ~\cite{herrmann2018mixer}, this has been demonstrated with the IEEE 802.15.4 PHY. We have recorded only one message loss in about \num{50000} messages in our experiments with the BLE \SI{2}{Mbps} PHY.}~\cite{herrmann2018mixer}, the temporal correlation among those very few message losses is practically negligible.
This also justifies our second assumption: we assume that the message loss probability is below \SI{100}{\percent}, a theoretical corner case that would prohibit any stability analysis.

\paragraph{System and Control Design}
For notational convenience, we assume homogeneous systems with equal choices of $Q$ and $R$ matrices for controller design, \ie $A_i=A_j$, $B_i=B_j$, and $F_{ij}=F_{ji}$ for all $i,j$.
We, thus, drop the index $i$ for $A$ and $B$ in this section and comment on how the analysis would need to be adapted for the heterogeneous case in the end of the section.
The controller matrix is designed such that both the overall system and the individual closed-loop systems $A+BF_{ii}$ would be stable under periodic communication without message loss.
For homogeneous systems, the system with the largest Mahalanobis distance from the origin will also have the highest priority.

We define stability of the system in terms of boundedness of its second moment (\cf~\cite{ramponi2010attaining}):
\begin{defi}[Mean square boundedness]
\label{def:stability}
The system~\eqref{eqn:gen_sys_lin_dyn} is mean square bounded (MSB) if there exist $\bar{\epsilon}$ and $\rho(\epsilon)$ such that for a given $\epsilon>\bar{\epsilon}>0$, $\norm{x(0)}_2 < \rho$ implies $\sup_{k\ge 0}\E[\norm{x(k)}_2^2]\le\epsilon$.
\end{defi}
With this definition, we can state the following theorem:
\begin{theo}
\label{thm:stability}
Consider a system consisting of $\nAgent$ homogeneous agents with stochastic LTI dynamics as in~\eqref{eqn:gen_sys_lin_dyn}, a transmission channel with $0<\nMsgCntrl<\nAgent$ resources per communication round, the control law~\eqref{eqn:gen_ctrl_law}, estimation strategy~\eqref{eqn:est_state_j}, and scheduling law~\eqref{eqn:comm_dec} with $\horizon=1$.
Then, the system is MSB.
\end{theo}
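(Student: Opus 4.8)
The plan is to reduce mean square boundedness of the full multi-agent state to a uniform second-moment bound on the per-agent estimation errors, and then to close the loop using the fact that the nominal closed loop is Schur stable by design. First I would stack the agents into $\mathbf{x}(k) = [x_1(k)^{\top}, \dots, x_{\nAgent}(k)^{\top}]^{\top}$ and write the exact closed-loop recursion. Substituting the control law~\eqref{eqn:gen_ctrl_law} and replacing each transmitted estimate by $\hat{x}_{j}(k) = x_j(k) - e_j(k)$, with $e_j := x_j - \hat{x}_{jj}$ the estimation error of \secref{sec:viol_prob}, the dynamics take the form $\mathbf{x}(k+1) = \mathcal{A}_{\mathrm{cl}} \mathbf{x}(k) - \mathcal{F}\mathbf{e}(k) + \mathbf{v}(k)$, where $\mathcal{A}_{\mathrm{cl}}$ is the block matrix with diagonal blocks $A+BF_{ii}$ and off-diagonal blocks $BF_{ij}$, and $\mathcal{F}$ collects the coupling gains. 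By the design assumption of \secref{sec:stab_analysis}, $\mathcal{A}_{\mathrm{cl}}$ is Schur, so there is $P \succ 0$ with $\mathcal{A}_{\mathrm{cl}}^{\top} P \mathcal{A}_{\mathrm{cl}} - P \prec 0$. Using $V(\mathbf{x}) = \mathbf{x}^{\top} P \mathbf{x}$ as a stochastic Lyapunov function, the independence of $\mathbf{v}(k)$, and Young's inequality to absorb the cross terms, I would derive a drift inequality $\E[V(\mathbf{x}(k+1))] \le (1-\eta)\E[V(\mathbf{x}(k))] + c_1 \E[\norm{\mathbf{e}(k)}_2^2] + c_2$ with $\eta \in (0,1)$ and $c_2$ absorbing $\Tr(P\Sigma_{v})$. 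Iterating yields MSB of $\mathbf{x}$ provided $\sup_k \E[\norm{\mathbf{e}(k)}_2^2] < \infty$.

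Second I would establish exactly that error bound. The crucial structural observation is that the coupling terms cancel in the error recursion: both the plant and the model-based estimator~\eqref{eqn:est_state_j} feed back the \emph{same} neighbor estimates, while the self-feedback enters as $F_{ii}x_i$ in the plant but as $F_{ii}\hat{x}_{ii}$ in the estimator. Hence, in the idealized case where all agents share the same estimate of agent $i$, the error obeys the decoupled switched recursion $e_i(k) = v_i(k-1)$ whenever agent $i$ transmits and is received ($\kappa_i=\phi_i=1$), and $e_i(k) = (A+BF_{ii})e_i(k-1) + v_i(k-1)$ otherwise. Since $A+BF_{ii}$ is Schur and $\Var[v_i]=\Sigma_{v_i}$ is constant, each mode keeps the second moment bounded by a geometric-series estimate that is uniform over every scheduling and loss realization; the iid message-loss assumption with loss probability below $1$ guarantees the reset mode is reached with positive probability and rules out the degenerate all-loss corner case. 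This gives a uniform bound on $\E[\norm{e_i(k)}_2^2]$, and hence on $\E[\norm{\mathbf{e}(k)}_2^2]$, independent of which $\nMsgCntrl$ agents the scheduler~\eqref{eqn:comm_dec} selects, so the overload $\nMsgCntrl<\nAgent$ never destabilizes the loop.

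The main obstacle is that this clean decoupling holds exactly only when all agents hold identical estimates, whereas message loss can make the per-pair estimates $\hat{x}_{i\ell}$ and $\hat{x}_{i'\ell}$ diverge, as flagged in the footnote of \secref{sec:viol_prob}. When that happens the neighbor-feedback terms no longer cancel and the errors acquire a coupling term proportional to the inter-agent estimate discrepancy. The heart of the proof is therefore to control this discrepancy: I would introduce it as an additional stochastic perturbation $\mathbf{d}(k)$ entering the error recursion, bound its conditional second moment by the (already bounded) error magnitudes times an indicator of recent loss events, and use the iid, rare nature of losses to show $\sup_k \E[\norm{\mathbf{d}(k)}_2^2] < \infty$. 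Treating $\mathbf{d}$ as a bounded exogenous input to the otherwise-Schur error dynamics restores the uniform error bound of the previous step, after which the Lyapunov drift of the first step closes the argument.

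The delicate point to get right is that the scheduling decision and the loss process \emph{jointly} determine the switching sequence, so all bounds must hold uniformly over the admissible sequences rather than merely in expectation over a fixed one; this is what makes the estimate-divergence term, and not the selection of only $\nMsgCntrl$ resources, the genuinely load-bearing part of the analysis. Finally I would remark that for heterogeneous agents $\mathcal{A}_{\mathrm{cl}}$ and the per-agent error matrices $A_i+B_iF_{ii}$ simply replace their homogeneous counterparts, and the same Lyapunov and decoupling steps go through as long as each $A_i+B_iF_{ii}$ and the global $\mathcal{A}_{\mathrm{cl}}$ remain Schur.
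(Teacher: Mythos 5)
Your proposal is correct under the paper's standing assumptions, but it reaches the conclusion by a genuinely different route. The paper casts the stacked estimation errors as a Markov chain and invokes the $f$-Norm Ergodic Theorem: it defines a drift function as the sum of squared Mahalanobis distances, evaluates the drift over a window of $K=\mathrm{ceil}(2\nAgent/\nMsgCntrl)$ steps, and partitions the agents into cases (error below threshold; transmitted at least once; above threshold but never scheduled), where the last case is closed by the scheduling law itself --- an agent that is never scheduled must have error dominated by the worst-case error of agents that were scheduled and reset. You instead run a direct Lyapunov argument: $\mathcal{A}_{\mathrm{cl}}$ is Schur by the design assumption of \secref{sec:stab_analysis}, so MSB of the state reduces to a uniform second-moment bound on the errors, and that bound holds because the per-agent error recursion switches between the Schur mode $\tilde{A}=A+BF_{ii}$ and a reset, which is contractive for \emph{every} scheduling and loss realization. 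Two consequences of the difference are worth recording. First, your argument never uses the scheduling law \eqref{eqn:comm_dec}, and under the standing assumption $\norm{\tilde{A}}_2<1$ you are right that it is not needed for mere boundedness --- communication can only help; this is a sharper observation than the paper makes explicit. The price is that your route has no path to the extension in Remark~\ref{rem:stab} ($\norm{\tilde{A}}_2>1$), where the highest-error-first property becomes load-bearing; the paper's drift-plus-case-analysis template is built so that this generalization changes only the bounds, not the architecture. Second, your reduction delivers boundedness only, while $f$-ergodicity additionally yields convergence to an invariant measure; for the theorem as stated that extra strength is not needed.

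The one step you must restructure is the treatment of the message-loss discrepancy, which as written is circular: you bound the errors assuming $\sup_k\E[\norm{\mathbf{d}(k)}_2^2]<\infty$, and then bound $\mathbf{d}$ ``by the already bounded error magnitudes.'' Since the discrepancy $\hat{x}_{i\ell}-\hat{x}_{j\ell}$ equals a difference of pairwise errors $\hat{e}_{j\ell}-\hat{e}_{i\ell}$, errors and discrepancies form a single coupled switched linear system, and neither bound can be taken as given before the other. The fix is a joint argument: either analyze the stacked error/discrepancy system at once (its injection-free modes are all derived from the Schur matrices $\tilde{A}$ and $\mathcal{A}_{\mathrm{cl}}$, so a common quadratic Lyapunov function can be sought), or do what the paper's Corollary~\ref{cor:stab_msg_loss} does --- condition on the time of the next successful transmission, observe that discrepancies are injected only at \emph{asymmetric} loss events, and use the i.i.d.\ loss assumption so that the probability $(1-p_{\mathrm{c}})^{m_{\mathrm{c}}}$ of long loss runs decays geometrically while the intervening growth stays bounded (again using $\norm{\tilde{A}}_2<1$). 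To be fair, the paper's own handling of this corner is also only a sketch; but since you correctly single out the discrepancy as the genuinely load-bearing difficulty, your proof is exactly the place where it should be done non-circularly.
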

The proof mainly follows the proofs in~\cite{mamduhi2017error,mamduhi2015robust}.
We here provide some intuition and comment on the main differences to~\cite{mamduhi2017error,mamduhi2015robust} while we defer the formal proof to the supplementary material.

As the controller is designed so that the overall closed-loop system is stable under reliable, periodic communication, potential instability stems from the estimation error.
Thus, we show that we can derive upper bounds for this error in a mean-squared sense, which ensures MSB of the overall system, as we show in the supplementary material.

Compared to~\cite{mamduhi2017error,mamduhi2015robust}, we assume a deterministic scheduling policy.
While in~\cite{mamduhi2017error,mamduhi2015robust} the agent with the largest error might not receive a slot due to the probabilistic scheduling policy, this cannot happen in our design.
Therefore, we can obtain tighter upper bounds on the estimation error.
Further, since we assume message loss to be \iid, we do not need to guarantee that a certain amount of messages is delivered successfully within a given time interval.

\begin{remark}
For homogeneous agents, we arrange agents according to their errors, since the system with the largest error has the largest Mahalanobis distance.
This simplifies notation since we can directly consider the errors.
However, for heterogeneous agents, where the errors may not be comparable, we can derive an upper bound on the probability of exceeding the threshold, which implies an upper bound on the error.
Thus, considering heterogeneous agents comes at no additional complexity, except for notation.
\end{remark}

\begin{remark}
Theorem~\ref{thm:stability} guarantees that the second moment is bounded but makes no statements about the actual bound.
Explicitly computing this bound would enable to assess the performance of the considered setup, which clearly depends on various parameters, such as the number of control messages $\nMsgCntrl$.
This can be used to derive adjustable performance bounds.
For a more detailed discussion, we refer the reader to~\cite{mamduhi2017error}.
How to extend the analysis to scenarios without local stabilization is discussed in Remark~\ref{rem:stab} in the supplementary material.
\end{remark}

\begin{figure}
    \centering
    \subcaptionbox{Testbed deployment (\SI{20}{\meter}~$\times$~\SI{30}{\meter}) including 6 agents with different types of real physical systems and 14 agents with simulated physical systems. \capt{The heterogeneity makes distributed control more challenging.}
    \label{fig:testbed_layout}}
    {\includegraphics[]{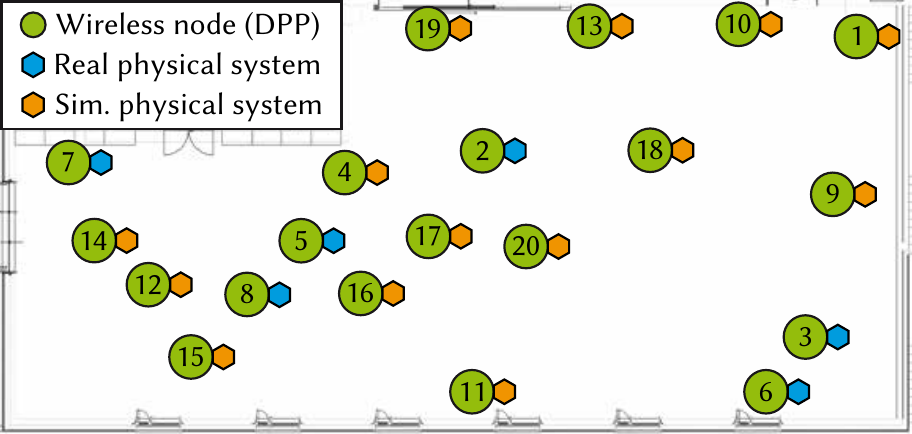}}
    \hfill
    \subcaptionbox{Cart-pole system. \capt{Controlled cart movements can stabilize the pole in an upright position.}
    \label{fig:cartpole_system}}
    {\begin{tikzpicture}[scale=0.85, every node/.style={scale=0.85}]]
\tikzstyle{ground}=[fill,pattern=north east lines,draw=none,minimum width=5,minimum height=0.1]
\tikzset{>=latex};
\draw(0.5,0.35)rectangle(-0.5,-0.35);
\draw[fill=white](-0.25,-0.5)circle(0.2);
\draw[fill=white](0.25,-0.5)circle(0.2);
\draw[ultra thick,lightgray](0,0)node(cart){}--(1,2.5);
\draw[fill=black](1,2.5)circle(0.2)node(pole){};
\draw[dashed] (0,0.5) -- (0,2.5)node(top){};
\draw[fill=black](0,0)circle(0.05);
\draw[fill=black](-0.25,-0.5)circle(0.025);
\draw[fill=black](0.25,-0.5)circle(0.025);
\node[ground,minimum width=115,anchor=north](floor)at(0,-0.7){};
\draw(floor.north east)--(floor.north west);
    \draw pic[" ",draw=black, -,  angle eccentricity=1.5,angle radius =
    2cm] {angle = pole--cart--top};
\draw[->](1.1,0)node[right]{Cart} -- (0.6,0);
\draw[->] (1.1,1.1)node[right]{Pole} -- (0.55,1.1);
\draw([shift={(0,-0.2cm)}]floor.south west) -- node[midway,draw,rectangle,inner sep = 0, minimum width=0.1em, minimum height=0.4em,label=below:0,fill=black]{}
node[pos=0,draw,rectangle,inner sep = 0, minimum width=0.1em, minimum height=0.4em,label=below:-25,fill=black]{}
node[pos=1,draw,rectangle,inner sep = 0, minimum width=0.1em, minimum height=0.4em,label=below:25,fill=black]{}
([shift={(0,-0.2cm)}]floor.south east);
\draw[->] (-1.5,-0.1)node[above]{Track} -- (-1.5,-0.6);
\node[below = 2em of floor,inner sep = 0]{Cart position $s$ [\si{\centi\meter}]};
\draw[->](-0.5,1.6)node[left, align=center]{Pole\\angle\\$\theta$ [\si{\degree}]} -- (0.3,1.6);
\end{tikzpicture}}
    \par\bigskip
    \subcaptionbox{Wireless node. \capt{Dual Processor Platform (DPP).}
    \label{fig:dpp}}
    {\includegraphics[width=0.38\linewidth, angle=270]{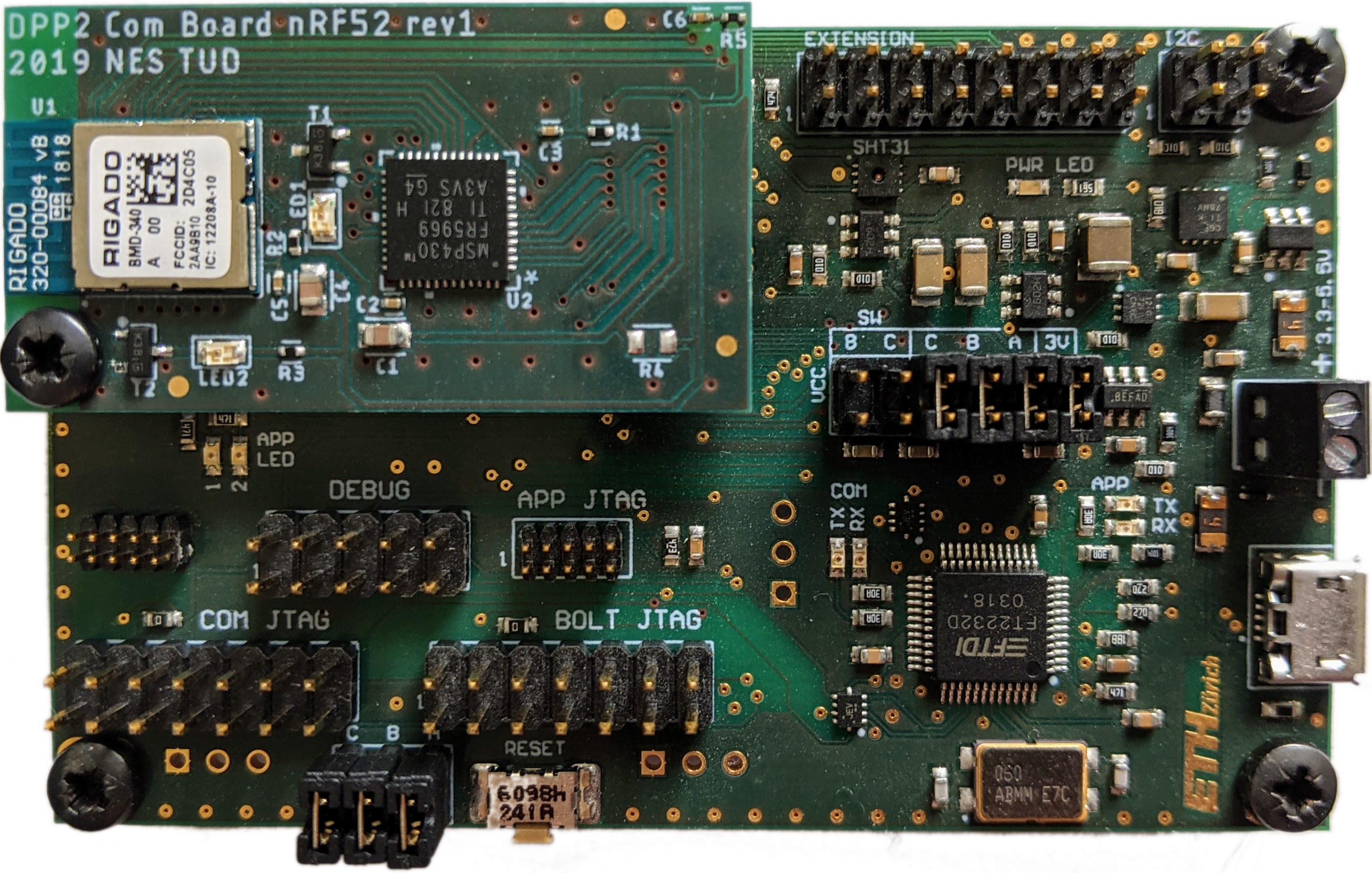}}
    \hfill
    \subcaptionbox{Self-built system on the left side and off-the-shelf system on the right side. \capt{The two types of real physical systems offer different dynamics.}
    \label{fig:pendulums}}
    {\includegraphics[width=0.69\linewidth]{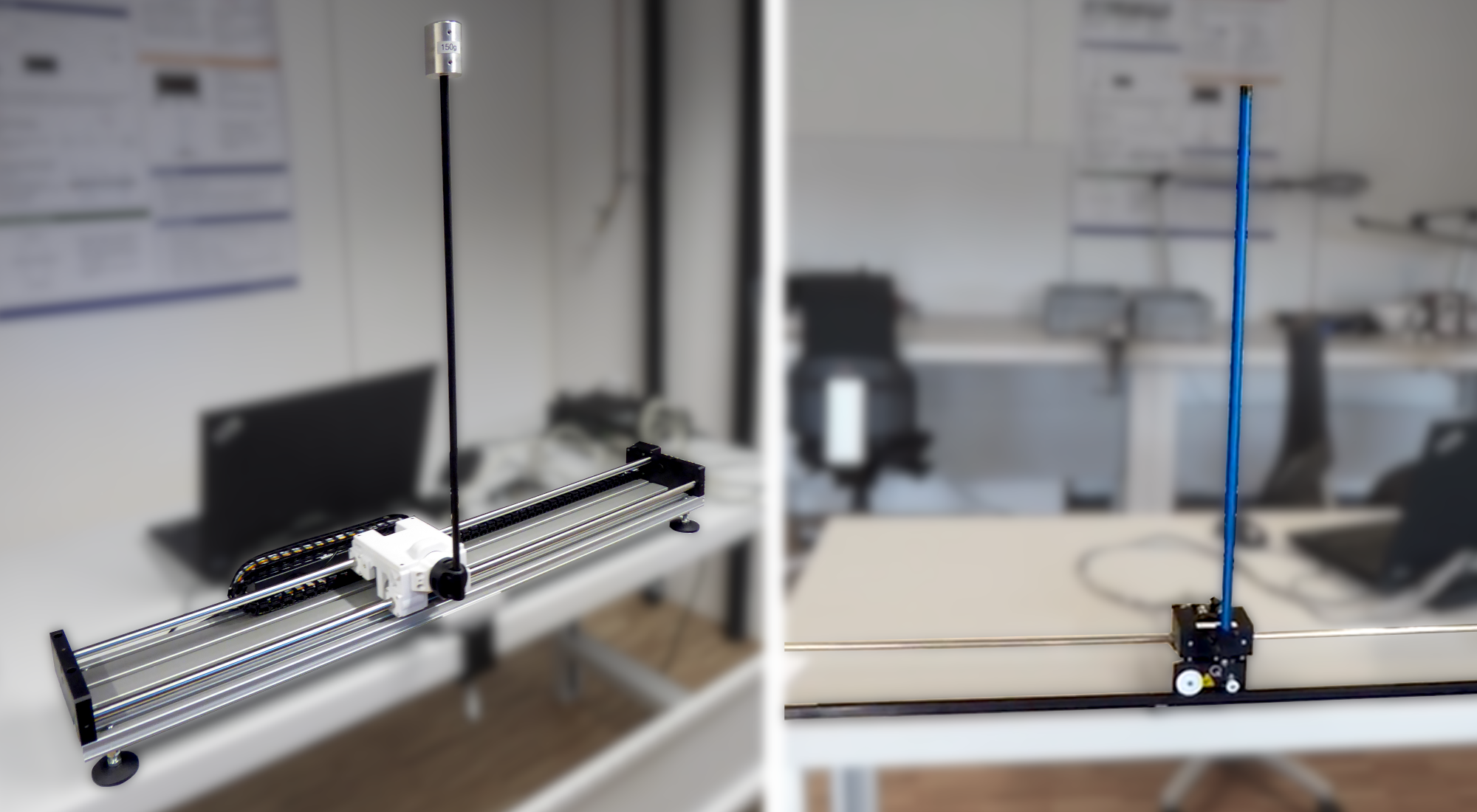}}
    \caption{Wireless cyber-physical testbed with 20 agents. Each agent consists of a wireless node (\dpp) and a real or simulated cart-pole system as physical system. The wireless nodes form a three-hop network.}
    \label{fig:testbed}
\end{figure}

\section{Testbed Experiments}
\label{sec:eval}

This section provides measurements from a 20-agent wireless cyber-physical testbed used to study both the effectiveness of our co-design approach and the interaction of predictive triggering with the communication system.
Evaluating such approaches on real-world testbeds is essential to build trust in the proposed solutions~\cite{lu2015real}.
In our experiments, we find:
\begin{itemize}
    \item The priorities determined by the predictive triggering framework reasonably reflect the need for communication depending on the difference between estimated and actual system states.
    The communication system effectively allocates the available network bandwidth to the agents with the highest need.
    \item Compared to a baseline using periodic control, our approach reduces control costs as defined in~\eqref{eqn:lqr_cost} by up to $\approx \SI{19}{\percent}$, leading to a better synchronization of the 20 agents in our cyber-physical testbed.
\end{itemize}

\subsection{Wireless Cyber-physical Testbed}
\label{sec:testbed}

\paragraph{Characteristics}
We use a wireless cyber-physical testbed with 20 agents that spans an area of about \SI{20}{\meter}~$\times$~\SI{30}{\meter} in our robotic lab.
\figref{fig:testbed_layout} shows the locations of the agents in our testbed.
Each agent consists of a wireless node (see \secref{sec:integration} and \figref{fig:dpp}) and a physical system.

\paragraph{Wireless Network}
We use the \SI{2}{Mbps} BLE mode and set the transmission power to \SI{-8}{\deci\belmilliwatt}.
With these settings, paths between wireless nodes are between 1 and 3 hops in length.
Communication in our testbed is subject to interference from other wireless transmitters operating in the \SI{2.4}{\giga\hertz} band and radiation of electrical components from co-located robotic experiments.

\paragraph{Physical Systems}
We use \emph{cart-pole systems} as physical systems (see \figref{fig:cartpole_system} and \figref{fig:pendulums}) because they exhibit fast dynamics and are widely studied in control theory~\cite{boubaker2012inverted}.
In particular, we aim at stabilizing the pole attached to the cart at a pole angle of $\theta=\SI{0}{\degree}$.
This is known as the \emph{inverted pendulum}, which can be well approximated by an LTI system for small deviations from the unstable equilibrium (\ie $\sin(\theta) \approx \theta$).
The state $x(k)$ of the system consists of the pole angle~$\theta(k)$, the cart position~$s(k)$, and their respective derivatives.
We measure $\theta(k)$ and $s(k)$ using angle sensors, and estimate $\dot{\theta}(k)$ and $\dot{s}(k)$ using finite differences and low-pass filtering.
The control input $u(k)$ is the voltage applied to the motor of the cart, steering the cart's movement speed and direction.

As shown in \figref{fig:testbed_layout}, our testbed features 6 real cart-pole systems connected to the \ap of nodes 2, 3, 5--8, and 14 simulated cart-pole systems by running simulation models on the AP of nodes 1, 4, 9--20.
The simulated systems allow us to scale up the number of agents in our testbed, given the hefty price tag of real cart-pole systems.
In particular, we use 2 off-the-shelf cart-pole systems from Quanser (agent 7 and 8) and 4 cart-pole systems we built in our lab (agent 2, 3, 5, and 6).
As a result, we have a testbed with three types of physical systems with different properties and communication demands.
In general, heterogeneity makes distributed control more challenging compared to homogeneous systems~\cite{lunze2012synchronization}.

\subsection{Scenario and Settings}
\label{sec:scenario}

\paragraph{Control Task}
We consider a scenario where the objective and timescales of the distributed control task are similar to, for example, coordination in drone swarms.
Drones have to locally stabilize their flight and, additionally, exchange information to coordinate their movements in order to keep a certain formation or avoid crashing into each other.
Similarly, all $\nAgent=20$ agents in our testbed have to locally stabilize their poles in an upright position at cart position $s = \SI{0}{\centi\meter}$ (\ie in the middle of their tracks) \emph{and} to synchronize their cart positions.
Synchronizing their cart position requires that all agents know the states of all others.
Stabilization, however, depends only on information every node has locally.
Thus, local update intervals are not bound to the communication delay and can be shorter, making stability less critical.
In our setup, we run the local control loop with an update interval of \SI{10}{\milli\second}.
For synchronization, agents exchange information over the multi-hop wireless network with a communication period of \SI{100}{\milli\second}.

As discussed in \secref{sec:viol_prob}, systems need to exchange their states because the state predictions using the deterministic model will deviate from the real-world over time.
However, in reality, disturbances, such as a gust of wind affecting a subset of drones, can create even more significant deviations.
To include this aspect in our experiments, we let all carts start at $s = \SI{0}{\centi\meter}$ but purposely fix the cart position of one of the simulated systems (agent 11) at $s = \SI{20}{\centi\meter}$ half-way through an experiment.

\begin{figure}[!tb]
    \begin{subfigure}[t]{\linewidth}
        \centering
        \input{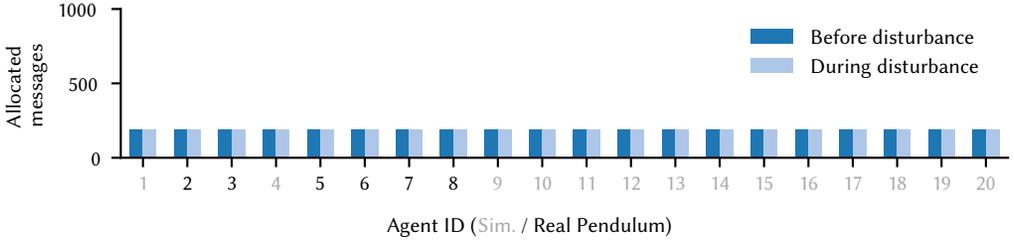}
        \caption{Periodic.
        \capt{
            Each agent is allocated the same number of messages regardless of the disturbance.
        }}
        \label{fig:resources_periodic}
    \end{subfigure}
    \begin{subfigure}[t]{\linewidth}
        \centering
        \input{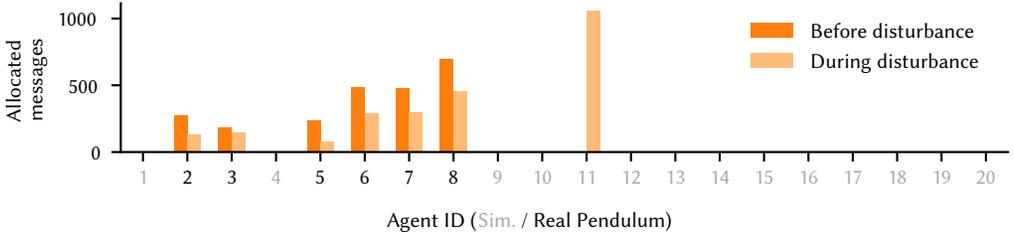}
        \caption{Predictive.
        \capt{
            Messages are allocated according to the agents' needs, correctly addressing the disturbance. %
        }}
        \label{fig:resources_predictive}
    \end{subfigure}
    \caption{
        Total number of control messages allocated to each agent throughout the experiments for the predictive and periodic approach.
        A disturbance is added to agent 11 in the middle of the experiments.
        \capt{
            Compared to the periodic approach, the predictive approach assigns the control messages based on the current state of the agents.
    }}
    \label{fig:resources}
\end{figure}

\paragraph{Comparison Approaches}
In addition to the control task, high-priority application traffic, such as image data recorded by drones in remote surveillance~\cite{Hayat2016}, occupies most of the available wireless communication bandwidth.
Specifically, every communication round, $\nMsgApp = 18$ messages are reserved for the application.
The remaining network bandwidth is used by the agents to exchange the information required for synchronization.
All messages have a payload size of \SI{32}{\byte}.
We compare our approach against \emph{periodic} control.
These two differ in the number of \nMsgCntrl messages and the way these messages are assigned to the agents.
Further details regarding the calculation of \nMsgCntrl is provided in the supplementary material in \secref{sec:app_remaining_bandwdith}.
To allow for a fair comparison between the two approaches, both use Mixer with the enhancements described in \secref{sec:scalability}.
The mechanisms from \secref{sec:triggering_support}, designed to support predictive triggering, are only used in our \emph{predictive} approach.
ETC and STC approaches are not considered here, because it is unclear how they can systematically address overload situations.

\underline{Periodic:} Periodic control introduces no overhead.
Thus, the remaining network bandwidth can be used for $\nMsgCntrl = 3$ messages per communication round in our scenario.
Since there are $\nAgent = 20$ agents that want to communicate each round, the system is overloaded.
The best we can do with periodic control is to allocate the \nMsgCntrl messages in a predetermined round-robin fashion; that is, each agent is allocated a control message every 7th communication round, effectively sharing its state every \SI{700}{\milli\second}.

\underline{Predictive:} This represents the novel predictive triggering approach presented in this paper, where in each communication round the \nMsgCntrl agents with the highest priorities are allowed to share their states.
To efficiently compute priorities on our embedded hardware we approximate~\eqref{eqn:viol_prob_analytic} with Chernoff bounds~\cite[Lem.~2.2]{dasgupta2003elementary}.
With a continuous priority measure, the probability that two nodes transmit the same priority is 0.
In an implementation, however, priorities must be quantized.
We choose $\aggsize = \SI{4}{bit}$ wide priorities, which lets us distinguish \num{16} different values.
If there is a tie, we sort the nodes by their IDs in ascending order.
With the priority aggregate in the packet header (see \figref{fig:comProtocol}) the size of all packets increases, which reduces the network bandwidth available for the control messages.
In our experiments, this reduces \nMsgCntrl to 2 control messages per round with an aggregate overhead of $S=\SI{5}{\byte}$ (see \eqref{eqn:aggregate_size}).
Further, we use a threshold of $P_\delta=0.5$ and choose $\hat{e}_\mathrm{max}=(0.03,0.03,0.1,0.3)^\transp$ as the maximum error to derive the priorities (\cf \secref{sec:viol_prob}).
In case any individual error state grows above the corresponding entry in $\hat{e}_\mathrm{max}$ in terms of its absolute value, we transmit the highest priority.

\paragraph{Controller Design}
For both approaches, we set all $Q_i$ matrices as suggested in~\cite{Quanser2012}, $R=0.1$, and set the first entry of $Q_\mathrm{sync}$ to 10 and all others to 0 to express our desire to synchronize the cart positions.
We adopt the system matrices provided in~\cite{Quanser2012} for the off-the-shelf systems and the ones from~\cite{mastrangelo2019predictive} for the self-built systems.
The respective matrices are used for controller design and estimation.
The simulated systems also use the matrices from~\cite{Quanser2012}, but we add normally distributed noise with a standard deviation of $10^{-4}$.

\subsection{Results}

In the following, we discuss the results for our predictive triggering approach and periodic control.
The results are representative for our concrete scenario and may differ in other cases.
Due to the design of the physical systems used, in particular the limited track length of the carts, the estimation error $\hat{e}$ is restricted to a relatively small value, which favors the periodic control approach.
In scenarios where $\hat{e}$ can grow larger or is potentially unbounded, we expect higher control performance gains with our predictive approach.

\paragraph{Resource Allocation}
In \figref{fig:resources} we show the total number of control messages allocated to each agent throughout one characteristic experiment for each approach.
We begin by discussing the first half of the experiment, before the disturbance, and continue with the second half afterwards.

\underline{Before the disturbance:}
With the \emph{periodic} approach, every agent gets the same share of messages.
Using the \emph{predictive} approach, instead, some agents get to communicate significantly more or less often than others.
This is because the \emph{predictive} approach dynamically allocates messages to agents based on their actual needs.
We see that agents with simulated systems (1, 4, 9--20) rarely get messages allocated ($\le 12$ messages per agent) because their states can be predicted more accurately. %
The states of agents with real systems (2, 3, 5--8) cannot be predicted as accurately as those of the simulated systems, so they need to communicate their states more often to achieve a better synchronization.
This higher need is effectively accounted for by the \emph{predictive} approach, as visible from the many messages allocated to these agents. %

\underline{During the disturbance:}
In the middle of the experiment the position of agent 11 is fixed at $s=\SI{20}{\centi\meter}$, which suddenly leads to very inaccurate state predictions for this agent.
This does not affect the message allocation in the \emph{periodic} approach and every agent still gets the same amount of messages.
The \emph{predictive} approach, on the other hand, adapts to the situation by allocating agent 11 the most messages among all agents ($>1000$ messages) during the second half of the experiment. %
The instantaneous trigger, \ie $P_i^0(k+\mathrm{\horizon}) > P_\delta$ in \eqref{eqn:comm_dec}, reduces the number of control messages by about 1--\SI{2}{\percent}.
More savings are expected when the ratio $\nMsgCntrl / \nAgent$ increases.

\begin{figure}[!tb]
    \centering
    \input{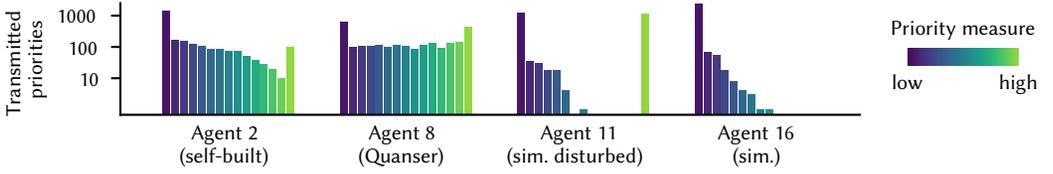}
    \caption{
        Per agent distribution of the transmitted priorities for a selected set of agents using the predictive approach.
        \capt{
            Systems that are more difficult to predict will transmit high priorities more often.
            Simulated systems transmit mostly low priorities, except agent 11 that is disturbed during half of the experiment.
    }}
    \label{fig:priorities}
\end{figure}

\paragraph{Priority Distribution}
In the \emph{predictive} approach, the decision about which agent is allocated a message is based on the communication needs, which is reflected by the transmitted priorities.
\figref{fig:priorities} shows how often each priority was transmitted during the experiment, for a selected set of agents.
The selection includes one representative agent for each type of pendulum (self-built, Quanser, simulated), and additionally the disturbed agent 11.
Each histogram indicates how often the respective agent has transmitted low to high priorities.
Agent 2 and 8 have different types of real physical systems (self-built vs. off-the-shelf).
Although both are cart-pole systems and, thus, have similar dynamics, their individual physical characteristics are different, resulting in different communication needs.
Agent 8 transmits higher priorities more often compared to agent 2, implying that the state of agent 8 is less predictable.
Agents 11 and 16 have simulated systems, so they show very similar communication needs.
However, as soon as we manipulate the cart position of agent 11, its predictions become inaccurate, and it chooses higher priorities more often; in fact, during the disturbance, it chooses the highest priority most often among all agents.
Consequently, the \emph{predictive} approach allocates a message to agent 11 in almost every communication round, helping the other agents to better synchronize the cart positions.
The priority distribution of the different pendulum types is so characteristic across all our experiments that the type of each pendulum can be identified from this alone.
Moreover, these distributions reflect the different communication needs which also shows the heterogeneity of our physical systems.

\begin{figure}[!tb]
    \centering
    \input{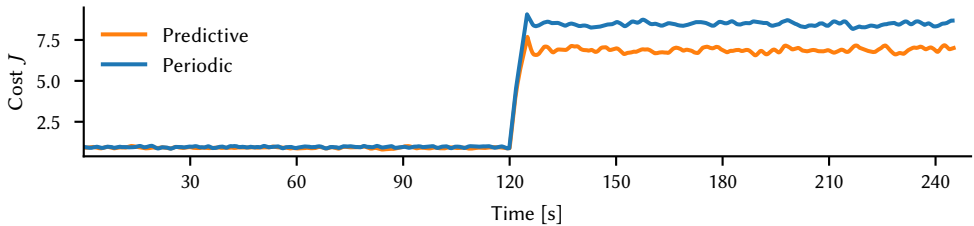}
    \caption{
        Control cost over time, averaged across multiple runs per approach.
        \capt{
            During the disturbance (starting at \SI{120}{\s}), the predictive approach decreases the control cost by around \SI{19}{\percent} due to the adaptive allocation of communication resources.
    }}
    \label{fig:costs}
\end{figure}

\paragraph{Control Performance}
\figref{fig:costs} shows the control performance by evaluating the cost function~\eqref{eqn:lqr_cost} in each time step.
The results are computed using a moving average with a window size of 50 time steps (\SI{5}{\s}), and averaged over 3 different runs per approach.
In the first half of the experiment, both approaches are close together with $\approx \SI{5}{\percent}$ lower cost for the \emph{predictive} approach.
The distance increases in the second half of the experiments, during the disturbance, where the control cost of our approach is $\approx \SI{19}{\percent}$ lower, because it adaptively allocates more messages to the disturbed agent 11, thus enabling a better synchronization.
This improvement is achieved even though the \emph{periodic} approach can transmit one control message more per communication round ($\nMsgCntrl=3$) compared to the \emph{predictive} approach ($\nMsgCntrl=2$).

In our experiments, the network bandwidth reserved for application traffic is relatively high ($\approx\SI{85}{\percent}$).
Conversely, the bandwidth remaining for control messages is small.
As a result, the overhead of our \emph{predictive} approach reduces \nMsgCntrl from 3 (with \emph{periodic} control) to 2, which is \SI{33}{\percent} less.
However, the overhead depends on the concrete experiment settings.
For example, in the same scenario but with $\nMsgApp = 10$ messages, the periodic approach could use $\nMsgCntrl = 11$ control messages and the predictive approach $\nMsgCntrl = 9$ control messages, with an aggregate overhead of $S = \SI{10}{bytes}$.
In this case, the number of control messages is only reduced by \SI{18}{\percent}.

\section{Conclusions}
\label{sec:ending}

We present a novel co-design of distributed control and wireless communication that successfully handles overload, \ie settings where we have more agents that want to transmit information than the communication system can support.
By deriving a priority measure in advance and efficiently distributing this measure among all agents, we can assign the few communication resources to the agents with the greatest need for communication.
We provide formal stability guarantees for the entire cyber-physical system, including control and communication systems.
Furthermore, we evaluate our approach on a real cyber-physical testbed, demonstrating support for synchronization of 20 cart-pole systems, at update intervals of \SI{100}{\milli\second} over a 3-hop wireless network using low-power embedded hardware.

\begin{acks}
	We thank Harsoveet Singh, Felix Grimminger, and Joel Bessekon Akpo for their help with the cyber-physical testbed, and Alexander Gräfe for important discussions and careful proofreading. We also thank the TEC group at ETH Zurich for the publicly available design of the DPP, and Kai Geißdörfer for designing the new DPP used in this work. This work was supported in part by the \grantsponsor{dfg}{German Research Foundation}{}~(DFG) through SPP 1914 (grants \grantnum{dfg}{ZI 1635/1-1}, \grantnum{dfg}{TR 1433/1-1}, and \grantnum{dfg}{TR 1433/1-2}), and the Emmy Noether project NextIoT (grant \grantnum{dfg}{ZI 1635/2-1}), the Cyber Valley Initiative, the Max Planck Society, and the Center for Advancing Electronics Dresden (cfaed).
\end{acks}

\bibliographystyle{ACM-Reference-Format}
\bibliography{ref}

\newpage

\appendix
\section{Proof of Theorem~\ref{thm:stability}}

Here, we provide the proof of Theorem~\ref{thm:stability}.
We start with some preliminaries that will be needed to ensure that the estimation errors $\hat{e}_i(k)$ and $\hat{e}_{ij}$ are well-behaved, what lets us conclude MSB of the overall system.

\subsection{Preliminaries}

To study the behavior of the estimation error, we employ the concept of $f$-ergodicity.
Generally, a stochastic process as~\eqref{eqn:gen_sys_lin_dyn} is said to be ergodic if its sample average and time average coincide.
The notion of $f$-ergodicity is stronger a stronger notion and is used in the context of Markov chains.
Intuitively, if a process is $f$-ergodic, the Markov chain is stationary and the process itself converges to an invariant finite-variance measure over the entire state-space.
More formally:
\begin{defi}[{\cite[Ch.~10]{meyn2012markov}}]
Let the Markov chain $\Phi=(\Phi(0),$ $\Phi(1),$ $\ldots)$ evolve in the state-space $\mathcal{X}$, which is equipped with some known $\sigma$-algebra $\mathcal{B}(\mathcal{X})$.
The Markov chain $\Phi$ is said to be positive Harris recurrent (PHR) if
\begin{enumerate}
	\item a non-trivial measure $\nu(B)>0$ exists for a set $B\in\mathcal{B}$ such that for all $\Phi(0)\in\mathcal{X}$, $P(\Phi(k)\in B, k<\infty)=1$ holds.
	\item $\Phi$ admits a unique invariant probability measure.
\end{enumerate}
\end{defi}
\begin{defi}[{\cite[Ch.~14]{meyn2012markov}}]
Let $f\ge 1$ be a real-valued function in $\R^n$.
A Markov chain $\Phi$ is said to be $f$-ergodic, if
\begin{enumerate}
	\item $\Phi$ is PHR with unique invariant measure $\pi$,
	\item the expectation $\pi(f)=\int f(\Phi(k))\pi(\diff\Phi(k))$ is finite,
	\item $\lim_{k\to\infty}\lVert P^k(\Phi(0),\cdot)-\pi\rVert_f = 0$ for every initial value $\Phi(0)\in\mathcal{X}$, where $\norm{\nu}_f=\sup_{\abs{g}\le f}\abs{\nu(g)}$.
\end{enumerate}
\end{defi}
We further define a drift function:
\begin{defi}[\cite{mamduhi2017error}]
Let $V:\R^n\to\R_{\ge 0}$, and $\Phi$ be a Markov chain.
For any measurable function $V$, the drift $\Delta V(\cdot)$ is
\begin{align}
\label{eqn:drift}
\Delta V(\Phi(k)) \coloneqq \E[V(\Phi(k+1))\mid\Phi(k)] - V(\Phi(k)),
\end{align}
with $\Phi(k)\in\R^n$.
\end{defi}
We can then establish f-ergodicity.
\begin{theo}[{$f$-Norm Ergodic Theorem~\cite[Ch.~15]{meyn2012markov}}]
\label{thm:f-ergodicity}
Suppose that the Markov chain $\Phi$ is $\psi$-irreducible and aperiodic and let $f(\Phi(k))\ge 1$ a real-valued function in $\R^n$.
If a small set $\mathcal{D}$ and a non-negative real-valued function $V$ exist such that $\Delta V(\Phi(k)) \le -f(\Phi(k))$, for every $\Phi(k)\in\R^n\setminus\mathcal{D}$, and $\Delta V<\infty$ for $\Phi(k)\in\mathcal{D}$, the Markov chain $\Phi$ is $f$-ergodic.
\end{theo}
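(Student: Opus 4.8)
The plan is to verify the three defining properties of $f$-ergodicity in turn, reducing each to the Foster--Lyapunov drift inequality together with $\psi$-irreducibility, aperiodicity, and the smallness of $\mathcal{D}$, following the standard Markov-chain machinery of~\cite{meyn2012markov}. First I would convert the hypothesis into the uniform drift condition on the whole space: on the small set $\mathcal{D}$ the map $\Delta V + f$ is bounded (this is exactly what condition (V3) asks of a small set), so setting $b := \sup_{x\in\mathcal{D}}\left(\Delta V(x) + f(x)\right) < \infty$ yields $\Delta V(x) \le -f(x) + b\,\mathbf{1}_{\mathcal{D}}(x)$ for all $x\in\R^n$. Since $f \ge 1$, the drift is at most $-1$ off $\mathcal{D}$, so Foster's criterion forces the expected hitting time of $\mathcal{D}$ to be finite from every state; combined with $\psi$-irreducibility, aperiodicity, and $\mathcal{D}$ small, this delivers positive Harris recurrence and the unique invariant probability measure $\pi$, which is requirement (1).

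Next I would control the $f$-moment of $\pi$ and record $f$-regularity. Applying the Comparison Theorem to the drift inequality gives, for every $x$,
\begin{align}
\E_x\!\left[\sum_{k=0}^{\tau_{\mathcal{D}}-1} f(\Phi(k))\right] \le V(x) + b,
\end{align}
where $\tau_{\mathcal{D}}$ is the return time to $\mathcal{D}$; this is precisely the assertion that $\mathcal{D}$ is an $f$-regular set and, by irreducibility, that $\Phi$ is $f$-regular. Integrating the drift inequality against $\pi$ — justifying $\int \Delta V_n\,\diff\pi \le 0$ through the truncations $V_n = \min(V,n)$ and monotone convergence, to sidestep the possible non-integrability of $V$ itself — yields $\pi(f) \le b\,\pi(\mathcal{D}) \le b < \infty$, which is requirement (2).

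The hard part is requirement (3), the quantitative convergence $\lim_{k\to\infty}\lVert P^k(\Phi(0),\cdot) - \pi\rVert_f = 0$. Here I would invoke the Nummelin splitting construction: using a minorization on the small set $\mathcal{D}$, embed $\Phi$ into a split chain possessing an accessible atom $\alpha$, then transfer the $f$-regularity of the previous step to the atom, obtaining finite $f$-weighted regeneration moments $\E_{\alpha}\!\left[\sum_{k<\sigma_\alpha} f(\Phi(k))\right] < \infty$, where $\sigma_\alpha$ is the return time to $\alpha$. Aperiodicity guarantees the split chain is genuinely aperiodic, so a renewal/coupling of one copy started at $\Phi(0)$ against a stationary copy regenerates at $\alpha$ with a coupling time whose $f$-weighted tail is summable; the $f$-weighted coupling inequality then dominates $\lVert P^k(\Phi(0),\cdot)-\pi\rVert_f$ by the residual $f$-mass beyond the coupling time, which vanishes as $k\to\infty$. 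The main obstacle is exactly this final step: making the splitting rigorous on a general (uncountable) state space and proving that $f$-regularity is inherited by the atom \emph{with finite $f$-moments} are the technical crux, because the renewal-theoretic and coupling estimates must be carried through in the $f$-weighted norm rather than in plain total variation. Everything preceding it reduces cleanly to the Comparison Theorem and classical Foster--Lyapunov recurrence theory.
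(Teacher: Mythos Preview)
Your proposal is a faithful sketch of the standard proof in Meyn and Tweedie, but you should be aware that the paper does not prove this theorem at all: it is stated as a cited result from~\cite[Ch.~15]{meyn2012markov} and used as a black-box tool in the subsequent stability analysis. There is therefore no ``paper's own proof'' to compare against; the authors simply invoke the theorem.

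That said, your outline is correct and tracks the architecture of the reference closely: rewriting the hypothesis as the (V3) drift condition $\Delta V \le -f + b\,\mathbf{1}_{\mathcal{D}}$, deducing positive Harris recurrence via Foster's criterion, obtaining $\pi(f)<\infty$ and $f$-regularity from the Comparison Theorem, and then handling $f$-norm convergence through the Nummelin splitting and an $f$-weighted coupling/renewal argument. You also correctly flag the genuine technical crux, namely transferring $f$-regularity to the artificial atom and carrying the renewal estimates in the $f$-norm rather than total variation. One minor remark: your justification of $\pi(f)\le b$ via truncations $V_n=\min(V,n)$ is the right instinct, but the clean route in the reference is to first establish $f$-regularity of $\mathcal{D}$ and then use the representation of $\pi$ through regeneration cycles, which avoids the delicate interchange-of-limits step.
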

The Markov chain defined through the dynamical system~\eqref{eqn:gen_sys_lin_dyn} is both aperiodic and $\psi$-irreducible.
This is the case since the noise distribution $v_i(k)$ is assumed to be absolutely continuous with an everywhere-positive density function.
Thus, every subset of the state space $\mathcal{X}$ is reachable within one time-step.
Here, $\psi$ is a non-trivial measure on $\R^n$.
Further, all compact subsets of an LTI system are small sets~\cite[Ch.~5]{meyn2012markov}.

\subsection{Stochastic Stability}

Equipped with the concepts from the previous section, we now present the stability proof.
For this, we first show that the Markov chain induced by the estimation errors $\hat{e}_i(k)$ and $\hat{e}_{ij}$ is $f$-ergodic and then show that this implies MSB of the overall system.

For analyzing the ergodicity of the error Markov chain we start with defining the function $V$ as the sum of all squared Mahalanobis distances from the equilibrium, \ie
\begin{align}
\label{eqn:sum_of_distances}
V(\hat{e}(k)) = \sum\limits_{i=0}^\nAgent \mah{\hat{e}_i(k)}^2 + \sum\limits_{i,j=0, i\neq j}^N \mah{\hat{e}_{ij}(k)}^2.
\end{align}
Ideally, we would now seek to guarantee $\Delta V(\hat{e}(k))$ to be negative at every time step to invoke $f$-ergodicity.
However, for an event-triggered strategy and a lossy communication channel, this is not possible.
We may have rounds, in which none of the agents seeks to communicate as all their errors are low enough.
Also, we may have agents that want to communicate but messages being lost.
In both cases, the drift may become positive.
Thus, we modify the drift definition from~\eqref{eqn:drift},
\begin{align}
\label{eqn:drift_modified}
\Delta V(\hat{e}(k),K) = \E[V(\hat{e}(k+K))\mid\hat{e}(k)]-V(\hat{e}(k)).
\end{align}
As discussed in~\cite[Ch.~19]{meyn2012markov}, Theorem~\ref{thm:f-ergodicity} can also be used to show $f$-ergodicity for such modified drift definitions.
Here, we choose $K=\text{ceil}(2 \nAgent / \nMsgCntrl)$, where ceil is a function rounding its argument to the next higher integer.
The incentive of this is to have an interval that allows each agent to communicate once.
Since agents announce their communication needs in advance and only learn about whether they received a slot in the round they receive it, it might happen that an agent is awarded two slots in a row.
To also allow for communication of all agents in such cases, we take two times the minimum interval.

In the absence of message loss, we have $\hat{e}_i(k)=\hat{e}_{ij}(k)$ for all $k$ and $j$.
In the following, we, thus, first derive bounds for $\hat{e}_i(k)$ and then show how we deal with errors due to message loss.
Note that the estimation error $\hat{e}_i(k+K)$ can be written as a function of $\hat{e}_i(k)$,
\begin{align}
\label{eqn:err_expl_form}
\begin{split}
\hat{e}_i(k+K) &= \tilde{A}^{K-k}\hat{e}_i(k)\prod_{n=k}^K(1-\kappa_i(n)\phi_i(n))\\
&+ \sum_{n=k}^{K-1}\tilde{A}^{K-k-1}v_i(K-n)\prod_{r=n+2}^K(1-\kappa_i(r)\phi_i(r)),
\end{split}
\end{align}
where $\tilde{A} = A + BF_{ii}$.

We now seek to upper bound the expected squared Mahalanobis distance of $\hat{e}_i(k)$.
Generally, such bounds can be derived as follows:
\begin{lem}
\label{lem:upperBound}
We can upper bound $\E[\mah{\hat{e}_i(k+1)}^2\mid \hat{e}_i(k)]$ in case of no communication by
\begin{align}
\label{eqn:upperBound}
\E[\mah{\hat{e}_i(k+1)}^2\mid \hat{e}_i(k)]\le\norm{\tilde{A}}_2^2\mah{\hat{e}_i(k)}^2 + \Tr(I_n).
\end{align}
\end{lem}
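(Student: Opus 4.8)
The plan is to prove the one-step bound directly from the error recursion, isolating where the constant $\Tr(I_n)$ and the gain $\norm{\tilde{A}}_2^2$ come from. First I would establish the error dynamics in the no-communication case. Combining the plant model~\eqref{eqn:gen_sys_lin_dyn}, the control law~\eqref{eqn:gen_ctrl_law}, and the ``otherwise'' branch of the estimator~\eqref{eqn:est_state_j}, the coupling terms $B\sum_{j}F_{ij}\hat{x}_{ij}(k)$ that appear identically in the true state update and in the estimate update cancel, leaving the clean recursion
\begin{equation}
\hat{e}_i(k+1) = \tilde{A}\hat{e}_i(k) + v_i(k), \qquad \tilde{A} = A + BF_{ii},
\end{equation}
which is exactly~\eqref{eqn:err_expl_form} specialized to $\kappa_i(k+1)\phi_i(k+1)=0$. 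This step is essentially bookkeeping, but it is where the design of the estimator pays off: the estimation error is driven only by the closed-loop matrix $\tilde{A}$ and the fresh process noise.

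Next I would condition on the realization $\hat{e}_i(k)$ and expand the squared Mahalanobis distance of $\hat{e}_i(k+1)$, writing $\mah{z}^2 = z^\transp M z$ with $M$ the constant weighting of~\eqref{eqn:mahalanobis_distance}, taken consistent with the one-step noise covariance, i.e.\ $M = \Sigma_\mathrm{v_i}^{-1}$. Substituting the recursion and using that $v_i(k)$ is zero-mean and independent of $\hat{e}_i(k)$, the linear cross term vanishes in expectation, leaving a deterministic quadratic term $(\tilde{A}\hat{e}_i(k))^\transp M(\tilde{A}\hat{e}_i(k))$ and a pure-noise term $\E[v_i(k)^\transp M v_i(k)] = \Tr(M\Sigma_\mathrm{v_i})$. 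With the chosen normalization the latter collapses to $\Tr(\Sigma_\mathrm{v_i}^{-1}\Sigma_\mathrm{v_i}) = \Tr(I_n)$, producing precisely the additive constant in~\eqref{eqn:upperBound}.

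The remaining, and principal, obstacle is dominating the quadratic term by $\norm{\tilde{A}}_2^2\,\mah{\hat{e}_i(k)}^2$, i.e.\ establishing the matrix inequality $\tilde{A}^\transp M\tilde{A}\preceq\norm{\tilde{A}}_2^2\,M$. This is \emph{not} automatic once $M\neq I_n$: a direct application of submultiplicativity of the spectral norm yields the congruence factor $\norm{M^{1/2}\tilde{A}M^{-1/2}}_2$ rather than $\norm{\tilde{A}}_2$. I would close this gap either by reading $\norm{\tilde{A}}_2$ as the operator norm induced by the $M$-weighted inner product, or by arguing that the normalization makes $M^{1/2}$ compatible with $\tilde{A}$ so that the congruence factor is controlled by $\norm{\tilde{A}}_2$; getting this constant honest is the delicate part of the argument.

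Finally I would record how the lemma feeds the global result, since on its own it only covers the single-agent, no-communication step. Iterating Lemma~\ref{lem:upperBound} across the $K=\lceil 2\nAgent/\nMsgCntrl\rceil$-step window guaranteed by the scheduler, and summing over agents, drives the modified drift~\eqref{eqn:drift_modified} of the Lyapunov function $V$ in~\eqref{eqn:sum_of_distances} negative outside a compact set; by the $f$-norm ergodic Theorem~\ref{thm:f-ergodicity} this gives $f$-ergodicity of the error Markov chain, from which mean square boundedness of the overall system follows.
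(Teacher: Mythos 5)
Your proposal is correct and follows essentially the same route as the paper's proof: expand the no-communication recursion $\hat{e}_i(k+1)=\tilde{A}\hat{e}_i(k)+v_i(k)$, drop the cross term because the noise is zero-mean and independent of $\hat{e}_i(k)$, identify the noise term as $\Tr(\Sigma_\mathrm{v_i}^{-1}\Sigma_\mathrm{v_i})=\Tr(I_n)$, and dominate the quadratic term by $\norm{\tilde{A}}_2^2\,\mah{\hat{e}_i(k)}^2$. The weighted-norm subtlety you flag (that $\tilde{A}^{\mathrm{T}} M \tilde{A}\preceq\norm{\tilde{A}}_2^2 M$ is not automatic when $M\neq I_n$) is genuine, but the paper's proof glosses over it in exactly the same spot by directly invoking the spectral-norm bound (labeled Cauchy--Schwarz), so your treatment is, if anything, the more careful of the two.
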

\begin{proof}
Following~\eqref{eqn:err_expl_form}, we have\footnote{Note that here we ignore the additional information that if the system does not receive a communication slot, its error must have been lower than that of the others. This ``negative information''~\cite{sijs2013event} could be used to arrive at sharper bounds.}
\begin{align}
\E[\mah{\hat{e}_i(k+1)}^2\mid \hat{e}_i(k)] = \E[\mah{\tilde{A}\hat{e}_i(k)+v_i(k)}^2\mid \hat{e}_i(k)].
\end{align}
Leveraging that the noise has zero mean and using the Cauchy-Schwarz inequality, we arrive at
\begin{align}
\begin{split}
&\E[\mah{\tilde{A}\hat{e}_i(k)+v_i(k)}^2\mid \hat{e}_i(k)] \\
&= \E[\mah{\tilde{A}\hat{e}_i(k)}^2\mid \hat{e}_i(k)]+\E[\mah{v_i(k)}^2]\\
&\le \norm{\tilde{A}}_2^2\E[\mah{\hat{e}_i(k)}^2\mid \hat{e}_i(k)] + \Tr(\Sigma_\mathrm{v_i}^{-1}\Sigma_\mathrm{v_i})\\
&=\norm{\tilde{A}}_2^2\mah{\hat{e}_i(k)}^2 + \Tr(I_n).
\qedhere
\end{split}
\end{align}
\end{proof}

We now prove $f$-ergodicity of the Markov chain induced by $\hat{e}_i$.
\begin{theo}
\label{thm:ergodicity_error}
Consider the setting from Theorem~\ref{thm:stability}.
Then, for any $\delta\in\R_{\ge 0}$, the Markov chain $\hat{\tilde{e}}_i(k)=[\hat{e}_1(k),\ldots,e_\mathrm{\nAgent}(k)]^\transp$ is $f$-ergodic.
\end{theo}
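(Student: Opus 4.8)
The plan is to verify the hypotheses of the $f$-Norm Ergodic Theorem (Theorem~\ref{thm:f-ergodicity}) for the error chain $\hat{\tilde{e}}_i(k)$, using as Lyapunov candidate the self-error part of~\eqref{eqn:sum_of_distances}, namely $V(\hat{\tilde{e}}(k)) = \sum_{i=1}^{\nAgent}\mah{\hat{e}_i(k)}^2$. Since the excerpt already records that the chain is $\psi$-irreducible and aperiodic and that every compact subset of the state space is small, the entire burden reduces to exhibiting a non-negative $V$ and a function $f \ge 1$ such that the multi-step drift~\eqref{eqn:drift_modified} obeys $\Delta V(\hat{\tilde{e}}(k), K) \le -f(\hat{\tilde{e}}(k))$ outside a compact set $\mathcal{D}$ and stays finite on $\mathcal{D}$. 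The window $K = \mathrm{ceil}(2\nAgent/\nMsgCntrl)$ is fixed precisely so that, with $\nMsgCntrl$ slots per round, every agent is guaranteed at least one scheduling opportunity within $[k, k+K]$, even allowing for the back-to-back allocation noted above.

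First I would expand $\hat{e}_i(k+K)$ via the closed form~\eqref{eqn:err_expl_form} and take the conditional expectation given $\hat{e}_i(k)$, conditioning further on the pattern of the indicators $\kappa_i$ and $\phi_i$ over the window. The telescoping product $\prod(1-\kappa_i(n)\phi_i(n))$ collapses to zero as soon as one successful transmission occurs, so on that event the surviving error is purely noise-driven and its squared Mahalanobis distance is bounded by a constant independent of $\hat{e}_i(k)$; iterating the single-step estimate of Lemma~\ref{lem:upperBound} controls that constant through $\norm{\tilde{A}}_2$ and $\Tr(I_n)$. On the complementary event of no delivery over the whole window, the dependence on the initial error persists, amplified by $\norm{\tilde{A}}_2^{2K}$. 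Because loss is \iid with probability below one and each agent is scheduled at least once, this bad event has probability at most $p^{m_i}$ with $m_i \ge 1$, giving a bound of the schematic form $\E[\mah{\hat{e}_i(k+K)}^2\mid \hat{e}_i(k)] \le p^{m_i}\,\norm{\tilde{A}}_2^{2K}\,\mah{\hat{e}_i(k)}^2 + C$.

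Summing over $i$ and invoking the deterministic serve-the-largest scheduling is what closes the argument: unlike the randomized policy of~\cite{mamduhi2017error,mamduhi2015robust}, the agents carrying the dominant contributions to $V$ are exactly the ones awarded slots, so the error terms amplified in the bad event are provably the smaller ones. This sorting structure lets me absorb the amplified small-error contributions into the additive constant and extract a strict contraction on the large-error part, yielding $\Delta V(\hat{\tilde{e}}(k), K) \le -(1-\beta)V(\hat{\tilde{e}}(k)) + C$ for an effective factor $\beta < 1$. Choosing $f(\hat{\tilde{e}}) = \tfrac{1-\beta}{2}V(\hat{\tilde{e}}) \vee 1$ and $\mathcal{D} = \{V \le R\}$ with $R = 2C/(1-\beta)$ then supplies negative drift outside $\mathcal{D}$ and finiteness within, and Theorem~\ref{thm:f-ergodicity} yields $f$-ergodicity.

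The hard part will be establishing that the effective factor $\beta$ is rigorously below one. The naive per-agent bound $p^{m_i}\,\norm{\tilde{A}}_2^{2K}$ need not be, since $\norm{\tilde{A}}_2$ may exceed one even though $\tilde{A} = A + BF_{ii}$ is Schur stable, and $p$ is only assumed below one. Resolving this cannot rely on the loss probability alone; it must lean on the sorting argument, showing that whenever $V$ is large the scheduled agents' contributions dominate $V$, so their near-certain reset forces a genuine net decrease irrespective of the crude per-agent amplification on the smaller, unserved terms. I expect this coupling between the ordering of the Mahalanobis distances and the window-level delivery guarantee to be the delicate technical core, mirroring but tightening the drift computations of~\cite{mamduhi2017error,mamduhi2015robust}.
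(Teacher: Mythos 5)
Your high-level scaffolding (invoking Theorem~\ref{thm:f-ergodicity}, the window $K=\mathrm{ceil}(2\nAgent/\nMsgCntrl)$, the Lyapunov function $V$) matches the paper, but your core drift estimate does not, and that is where the argument breaks. You aim for a contraction bound $\Delta V(\hat{\tilde{e}}(k),K)\le -(1-\beta)V(\hat{\tilde{e}}(k))+C$ and yourself flag that proving $\beta<1$ is unresolved; it cannot be resolved along the route you sketch. The amplified contributions of unserved agents cannot be ``absorbed into the additive constant'': they scale with those agents' errors, which are unbounded (merely smaller than the served agents' errors), so they are not $O(1)$. The two claims you lean on are also false in this setting: \emph{(i)} ``each agent is scheduled at least once'' within the window is not guaranteed by the priority scheduler --- an above-threshold agent can be crowded out in every one of the $K$ rounds (this is exactly the paper's case $c_{3\mathrm{b}}$), and a below-threshold agent never competes at all; and \emph{(ii)} ``the scheduled agents' contributions dominate $V$ whenever $V$ is large'' fails, e.g., when all $\nAgent$ errors are nearly equal: the served agents then carry only a fraction $\nMsgCntrl/\nAgent$ of $V$, so resetting them shrinks $V$ by at most that fraction while every unserved error is amplified by $\norm{\tilde{A}}_2^2$ per step, and $(1-\nMsgCntrl/\nAgent)\norm{\tilde{A}}_2^{2}$ need not be below one when $\norm{\tilde{A}}_2>1$. (A further mismatch: the paper proves this theorem in the absence of message loss --- loss is deferred to Corollary~\ref{cor:stab_msg_loss} --- so the factors $p^{m_i}$ do not belong in this proof at all.)

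The device you are missing is that the trigger threshold $\delta$, not a contraction factor, is what removes the dependence on the initial error. The paper partitions the agents over the window into cases: agents whose error was below $\delta$ at time $k+K-1$, or at some time within the window (case $c_{3\mathrm{a}}$), have $\E[\mah{\hat{e}_i(k+K)}^2\mid\hat{e}_i(k)]$ bounded by $\delta$ times bounded powers of $\norm{\tilde{A}}_2$ plus noise terms --- a constant; agents that transmitted have their error reset, so their final error is a noise-driven constant (Lemma~\ref{lem:upperBound} iterated from the reset time); and agents that stayed above $\delta$ throughout yet were never served are, by the serve-the-largest rule, dominated by the served agents and hence also bounded by a constant. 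Summing these bounds gives $\Delta V(\hat{\tilde{e}}(k),K)\le\zeta-V(\hat{\tilde{e}}(k))$ with $\zeta$ independent of $\hat{\tilde{e}}(k)$, after which $f=1+\epsilon V$ and a compact small set finish the proof. Note that this requires no smallness of $\norm{\tilde{A}}_2^{2K}$, no probabilistic estimate, and no ordering-versus-amplification trade-off: every case yields an initial-condition-free bound. Your proposal never invokes $\delta$, which is precisely why you are left fighting an amplification factor that the paper's case analysis renders irrelevant.
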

\begin{proof}
We let the dynamical system evolve over the time interval $[k,k+K]$ and study all possible outcomes.
For this, we partition the agents in mutual exclusive subsets:
\begin{enumerate}
	\item[$c_1$] The agent has or has not transmitted over $[k,k+K]$ and $\mah{\hat{e}_i(k+K-1)}^2\le\delta$;
	\item[$c_2$] The agent has transmitted successfully at least once over $[k,k+K]$ and $\mah{\hat{e}_i(k+K-1)}^2>\delta$;
	\item[$c_3$] The agent has never been assigned a slot over $[k,k+K]$ and $\mah{\hat{e}_i(k+K-1)}^2>\delta$.
\end{enumerate}
We now derive upper bounds for all cases.
For case $c_1$, we know that the estimation error at the last time instant was smaller than the threshold $\delta$.
Thus, we can use this and Lemma~\ref{lem:upperBound} to arrive at
\begin{align}
\label{eqn:proof_case1}
\E[\mah{\hat{e}_i(k+K)}^2\mid\hat{e}_i(k)]\le \delta\norm{\tilde{A}}_2^2 + \Tr(I_n).
\end{align}
For case $c_2$, assume that the last transmission instant was at $k=r$.
Thus, we know that at $k=r$ the error was reset to $0$ and obtain
\begin{align}
\label{eqn:proof_case2}
\E[\mah{\hat{e}_i(k+K)}^2\mid\hat{e}_i(k)]\le \sum\limits_{s=r}^K \Tr(I_n)\norm{\tilde{A}^{K-s}}_2^2.
\end{align}

For case 3, we consider two subcases:
\begin{enumerate}
	\item[$c_{3\mathrm{a}}$] Over $[k,k+K]$, for at least one $k'$ we had $\mah{\hat{e}_i(k')}^2\le\delta$;
	\item[$c_{3\mathrm{b}}$] Over $[k,k+K]$, we had $\mah{\hat{e}_i(k')}^2>\delta$ for all $k'$.
\end{enumerate}
For case $c_{3\mathrm{a}}$, assume that $k=r$ was the last time we had $\mah{\hat{e}_i(k')}^2\le\delta$.
Then, we have, again using Lemma~\ref{lem:upperBound},
\begin{align}
\label{eqn:proof_case3a}
\E[\mah{\hat{e}_i(k+K)}^2\mid\hat{e}_i(k)]\le \delta\norm{\tilde{A}^{K-r}}_2^2 + \sum\limits_{s=r}^{K-1} \Tr(I_n)\norm{\tilde{A}^{K-s-1}}_2^2.
\end{align}

For case $c_{3\mathrm{b}}$, we have agents whose error was above the threshold $\delta$ in every time-step, but never got a slot.
As the scheduling rule always awards resources to the agents with the highest error, we can upper bound the error of agents in $c_{3\mathrm{b}}$ with the worst-case error of the agents in $c_2$,
\begin{align}
\label{eqn:proof_case3b}
\E[\mah{\hat{e}_i(k+K)}^2\mid\hat{e}_i(k)]\le\max_{i\in c_2}\E[\mah{\hat{e}_i(k+K)}^2\mid\hat{e}_i(k)].
\end{align}
The drift can now be upper bounded as
\begin{align}
\label{eqn:proof_drift}
\Delta V(\hat{\tilde{e}}_i(k),K) &\le \sum\limits_{i\in c_1,c_2,c_3}\E[\mah{\hat{e}_i(k+K)}^2\mid\hat{e}_i(k)]-V(\hat{e}(k))\nonumber\\
&\le\sum\limits_{i\in c_1}\left(\delta\norm{\tilde{A}}_2^2+\Tr(I_n)\right)
+\sum\limits_{i\in c_2}\sum\limits_{n=r_i}^K\Tr(I_n)\norm{\tilde{A}^{K-n}}_2^2\nonumber\\
&+ \sum\limits_{i\in c_{3\mathrm{a}}}\left( \delta\norm{\tilde{A}^{K-r}}_2^2 + \sum\limits_{n=r}^{K-1} \Tr(I_n)\norm{\tilde{A}^{K-n-1}}_2^2\right)\nonumber\\
&+\sum\limits_{i\in c_{3\mathrm{b}}}\max_{j\in c_2}\left(\sum\limits_{n=r_j}^K\Tr(I_n)\norm{\tilde{A}^{K-n}}_2^2\right)-V(\hat{e}(k))\nonumber\\
&= \zeta - V(\hat{\tilde{e}}_i(k)),
\end{align}
where $\zeta$ represents all bounded terms stemming from cases $c_1$, $c_2$, and $c_3$.
We can then define $f(\hat{\tilde{e}}(k))=1+\epsilon V(\hat{\tilde{e}}(k))$, with $\epsilon\in(0,1)$.
Since $V(\hat{\tilde{e}}(k))\ge0$, it follows that $f(\hat{\tilde{e}}(k))\ge1$.
As $v(\hat{\tilde{e}}(k))$ grows with $\hat{\tilde{e}}(k))$ while $\zeta$ is constant, we can further find a small set $\mathcal{D}$ and an $\epsilon$ such that $\Delta V(\hat{e}(k),K)\le -f$, which proves $f$-ergodicity according to Theorem~\ref{thm:f-ergodicity} in the absence of message loss.
\end{proof}

With this, we have shown $f$-ergodicity of $\hat{\tilde{e}}_i(k)$.
However, there might, due to message loss, still be a divergence between $\hat{e}_i(k)$ and $\hat{e}_{ij}(k)$, \ie the estimates that other agents have about agent $i$'s state.
\begin{cor}
\label{cor:stab_msg_loss}
Consider the same setting as in Theorem~\ref{thm:ergodicity_error}.
Then, the Markov chain $\hat{e}_=[\hat{e}_{1}(k),\ldots,$ $\hat{e}_{1\mathrm{\nAgent}},\hat{e}_{21},\ldots,e_{\mathrm{\nAgent}}(k)]^\transp$ is $f$-ergodic under \iid message loss.
\end{cor}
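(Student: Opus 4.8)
The plan is to bootstrap off Theorem~\ref{thm:ergodicity_error}, which already establishes $f$-ergodicity of the chain of self-errors $\hat{\tilde{e}}_i(k)$, and which in the loss-free case already covers the cross-errors as well, since there $\hat{e}_{ij}(k)=\hat{e}_i(k)$. The only new phenomenon in the corollary is that i.i.d.\ message loss decouples the estimate another agent holds of agent $i$'s state from the self-estimate that agent $i$ uses for triggering: agent $i$ always possesses its own state, so $\hat{e}_i(k)$ is reset whenever agent $i$ is scheduled ($\kappa_i(k)=1$), whereas the cross-error $\hat{e}_{ij}(k)$ is reset only when agent $i$ is scheduled \emph{and} the packet is successfully decoded at the receiving agent, i.e.\ on the event $\kappa_i(k)\phi_i(k)=1$.

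First I would write $\hat{e}_{ij}(k+K)$ in the explicit form~\eqref{eqn:err_expl_form}, observing that it obeys the identical closed-loop recursion with system matrix $\tilde{A}=A+BF_{ii}$ and driving noise $v_i$ as $\hat{e}_i$, the sole difference being that the reset products retain the full factors $(1-\kappa_i(n)\phi_i(n))$ instead of collapsing $\phi_i$ to one. Because message loss is assumed i.i.d.\ and independent of both the noise sequence and the priority-driven schedule, I would then take the conditional expectation over $\phi_i$ factor by factor. Each lost-transmission factor contributes the loss probability, which by assumption is strictly below one, so the probability that agent $i$ experiences no successful reception over a window decays geometrically in the number of slots granted to it in that window.

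The core estimate is a loss-averaged analogue of Lemma~\ref{lem:upperBound}: partitioning the agents into the same mutually exclusive cases as in the proof of Theorem~\ref{thm:ergodicity_error}, but now reading ``transmitted successfully'' as $\kappa_i\land\phi_i=1$, I would bound $\E[\mah{\hat{e}_{ij}(k+K)}^2\mid\hat{e}(k)]$ by summing the per-case bounds weighted by their geometrically small no-reset probabilities. Since $\tilde{A}$ is Schur by design, the homogeneous contributions $\norm{\tilde{A}^{m}}_2^2$ accumulated between successive receptions are summable and the injected process-noise contribution $\Tr(I_n)\sum_m\norm{\tilde{A}^{m}}_2^2$ stays finite even if several resets are missed. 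Feeding these bounds into the augmented Lyapunov function~\eqref{eqn:sum_of_distances} then yields a modified drift~\eqref{eqn:drift_modified} of the form $\Delta V(\hat{e}(k),K)\le\zeta'-V(\hat{e}(k))$ with a finite constant $\zeta'$, from which $f$-ergodicity follows via the $f$-norm ergodic theorem (Theorem~\ref{thm:f-ergodicity}) exactly as before, taking $f(\hat{e}(k))=1+\epsilon V(\hat{e}(k))$ with $\epsilon\in(0,1)$ and a sufficiently large small set $\mathcal{D}$.

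The hard part is precisely the step where probabilistic resets replace the deterministic ones: I must show that averaging over the i.i.d.\ loss events preserves a \emph{uniform} bound on the conditional expected squared Mahalanobis distance of $\hat{e}_{ij}$, rather than one that degrades as the chance of a long run of losses accumulates. This is where the two standing assumptions must be used jointly—the loss probability being strictly below one yields geometric decay of the no-reset probability and hence a finite expected inter-reception time, while the Schur stability of $A+BF_{ii}$ guarantees that the noise injected during those (in expectation finitely many) un-reset steps accumulates only to a bounded variance. Verifying that the resulting $\zeta'$ is independent of the current error, so that the negative-drift-outside-a-compact-set condition of Theorem~\ref{thm:f-ergodicity} genuinely holds for the full chain including the cross-terms, is the crux of the argument.
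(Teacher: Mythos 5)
There is a genuine gap, and it is exactly the part of the argument on which the paper spends most of its proof. Your plan rests on the claim that the cross-error $\hat{e}_{ij}$ satisfies the same recursion \eqref{eqn:err_expl_form} as the self-error $\hat{e}_i$, ``the sole difference being that the reset products retain the full factors $(1-\kappa_i(n)\phi_i(n))$.'' Under per-receiver \iid message loss this is false. By the control law \eqref{eqn:gen_ctrl_law}, agent $j$'s actual input is computed from agent $j$'s \emph{own} estimates $\hat{x}_{j\ell}$ of the other agents, whereas agent $i$'s prediction of agent $j$ via \eqref{eqn:est_state_j} propagates using agent $i$'s estimates $\hat{x}_{i\ell}$. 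In a step without successful reception one therefore gets
\begin{equation*}
\hat{e}_{ij}(k+1)=\tilde{A}\,\hat{e}_{ij}(k)+v_j(k)+B\sum_{\ell\in\Omega_j}F_{j\ell}\bigl(\hat{x}_{j\ell}(k)-\hat{x}_{i\ell}(k)\bigr),
\qquad \hat{x}_{j\ell}(k)-\hat{x}_{i\ell}(k)=\hat{e}_{i\ell}(k)-\hat{e}_{j\ell}(k),
\end{equation*}
so the cross-errors are driven not only by process noise but by \emph{differences of estimates held by different receivers}, which are nonzero precisely because distinct agents can lose distinct messages. This extra drive is itself a component of the Markov chain, not a bounded disturbance, so your loss-averaged analogue of Lemma~\ref{lem:upperBound} does not deliver $\Delta V\le\zeta'-V$ with a constant $\zeta'$: the per-case bounds for $\hat{e}_{ij}$ involve the other cross-errors, and the drift argument as you set it up becomes circular.

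This is why the paper's proof has two stages. The first stage (losses only on agent $i$'s own messages) is close in spirit to yours: it extends the horizon to the next successful reception and combines the geometric no-reception probability $(1-p_\mathrm{c})^{m_\mathrm{c}}$ with $\norm{\tilde{A}}_2<1$, using a stopping-time argument to ensure the number of transmission attempts grows. The second stage then introduces the additional divergence term \eqref{eqn:msg_loss_l}, bounds it in \eqref{eqn:msg_loss_l_div} after a common successful update from a third agent $\ell$, and argues that the only scenario in which this term can grow without bound --- agent $i$ keeps receiving updates from $\ell$ while agent $j$ never does --- has probability decaying geometrically, faster than the induced error growth. Incidentally, what you single out as the crux (uniformity of $\zeta'$ over long loss runs) is comparatively routine once $\norm{\tilde{A}}_2<1$; the real crux is the cross-agent information asymmetry that your recursion assumes away.
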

\begin{proof}
First, assume that only messages of agent~$i$ might be lost, but not those of other agents.
Then, we can account for lost messages, following the strategy of~\cite{mamduhi2015robust}, by extending the horizon until the next successful message transmission of agent $i$.
That is, we consider $k\to k+K+m^*$ instead of $k\to k+K$.
We then have
\begin{align}
\label{eqn:msg_loss_mstar}
    \E[\mah{\hat{e}_{ij}(k+K+m^*)}^2\mid \hat{e}_{ij}(k)] \le \Tr(I_n).
\end{align}
Thus, for any finite $m^*$, the estimation error is bounded.
Now, consider the case $m\to\infty$ without a single message being transmitted successfully.
Due to the \iid property of message loss, the probability of losing $m_\mathrm{c}$ messages in a row is $(1-p_\mathrm{c})^{m_\mathrm{c}}$, where $p_\mathrm{c}$ is the probability of successfully transmitting a message.
Thus, we have
\begin{align}
\label{eqn:msg_loss_bound}
&\E[\mah{\hat{e}_{ij}(k+K+m)}^2\mid \hat{e}_{ij}(k))] \le\\
&(1-p_\mathrm{c})^{m_\mathrm{c}} \left(\mah{\hat{e}_{ij}(k)}^2\norm{\tilde{A}^{K+m}}_2^2 + \sum\limits_{n=k}^{K+m-1} \Tr(I_n)\norm{\tilde{A}^{K-n-1}}_2^2\right)\nonumber.
\end{align}
The variable $m_\mathrm{c}$ here denotes the amount of times agent $i$ tried to transmit its state in the interval $k\to k+K+m$.
Agent $i$ competes for a slot, if $\mah{\hat{e}_i(k)}>\delta$ for some $k$ and if it gets a slot the error is reset.
Thus, the time at which $\hat{e}_i(k)$ exceeds the threshold $\delta$ can be interpreted as a stopping time, which has finite expected value~\cite[Sec.~2]{grebenkov2014first}.
Therefore, agent $i$ will infinitely often compete for slots as $m\to\infty$.
The same holds for all other agents and since we assume homogeneous agents, each will be assigned the same amount of slots as $m\to\infty$ due to the law of large numbers.
In conclusion, we have $m_\mathrm{c}\to\infty$ as $m\to\infty$.
As we further have $\norm{\tilde{A}}_2<1$ and $p_\mathrm{c}>0$ (\cf the assumptions stated at the beginning of \secref{sec:stab_analysis}),~\eqref{eqn:msg_loss_bound} goes to zero as $m^*$ goes to infinity.

In practice, also other agents may lose messages.
Then, the estimation error $\hat{e}_{ij}$ might grow not only because agent~$j$ is missing information about agent~$i$, but also because their information about other agents states might differ.
Since the control input depends also on these other states, this might lead to diverging predictions.
In this case, we need to add another term
\begin{align}
\label{eqn:msg_loss_l}
\sum_{\ell\in\Omega_i\setminus\{i,j\}}\mah{\hat{e}_{i\ell}(k+K+m^*-1)-\hat{e}_{j\ell}(k+K+m^*-1)}^2\norm{\tilde{A}}_2^2
\end{align}
to~\eqref{eqn:msg_loss_mstar}.
Also these terms can be bounded.
In the beginning, we assume agents~$i$ and~$j$ to start with the same initial guess about agent~$\ell$, \ie~\eqref{eqn:msg_loss_l} is zero.
We can now, similar as above, extend the horizon such that at $k+K+m^*$,~$i$ successfully transmitted a message to~$j$ and at some $k'$ within $k\to k+K+m^*$ both~$i$ and~$j$ successfully received an update from~$\ell$.
If then~$i$ receives another update within at some $k''>k'$ while~$j$ did not, the error is bounded by
\begin{align}
\label{eqn:msg_loss_l_div}
\E[\mah{\hat{e}_{i\ell}(k+K+m^*)-\hat{e}_{j\ell}(k+K+m^*)}^2\mid\hat{e}_{ij}(k)]\le \sum_{s=k'}^{K+m^*}\Tr(I_n)\norm{\tilde{A}^{K+m^*-s}} + \sum_{s=k''}^{K+m^*}\Tr(I_n)\norm{\tilde{A}^{K+m^*-s}}.
\end{align}
The only possibility for the error to grow without bounds is for agent~$i$ to receive an update of~$\ell$ at some point while~$j$ never receives one.
For the same arguments as above, the probability of that happening converges to zero faster than the error grows as $m$ goes to infinity.

Thus, also in the case of message loss, the Markov chain is $f$-ergodic.
\end{proof}

\begin{remark}
\label{rem:stab}
Under additional assumptions, the results can be extended to systems for which $\norm{\tilde{A}}_2 > 1$.
In that case, it needs to be ensured that the probability of successfully delivering a message grows faster than the estimation error.
\end{remark}

With this, we can now prove stability of the overall system.
For MSB following Definition~\ref{def:stability}, we need $\E[\norm{x(k)}_2^2]$ to be bounded.
The individual system without estimation error is exponentially stable, \ie its second moment would go to 0 at an exponential rate.
The second moment of the estimation error can be upper bounded, independent of the state of the system, and also the noise variance is constant and independent of the system's state.
That is, we essentially have an exponentially stable system with a constant disturbance.
Thus, also the second moment of the system state is bounded~\cite[Ch.~7.6d]{callier2012linear}.

\section{Usage of the Network Bandwidth for Control}
\label{sec:app_remaining_bandwdith}
The network bandwidth available for control traffic (\nMsgCntrl) depends on the communication demand of the application (\nMsgApp) and other scenario parameters such as the required update interval.
In our experiments in \secref{sec:scenario} the communication period is \SI{100}{\milli\second}, and in each round $\nMsgApp = 18$ application messages are sent.
In between two communication rounds, each agent does computations based on the received data, and provides the current priority and state information for the next communication round.
For these calculations and some additional buffer time we need to reserve \SI{24}{\milli\second}, thus, \SI{76}{\milli\second} remain for the communication system to exchange data.
Based on our BLE experiments in \secref{sec:scalability}, we can derive that Mixer needs on average about \num{9.5} slots per message.
In the \SI{2}{Mbps} BLE mode it takes \SI{4}{\micro\second} to transmit one byte.
With respect to the packet size and other unavoidable delays (see \cite{herrmann2018mixer}), \eg switching the radio from receive to transmit mode and vice versa, the resulting duration of one Mixer slot is \SI{380}{\micro\second}.
We can now calculate \nMsgCntrl, first for the \emph{periodic} approach and then for our \emph{predictive} approach.

\subsection{Periodic}
The \emph{periodic} approach has no additional overhead and \nMsgCntrl can be straightforwardly calculated with
\begin{equation}
	\label{eqn:mc_periodic}
	\mathrm{communication\_time} \ge (\nMsgApp + \nMsgCntrl) * \mathrm{slots\_per\_message} * \mathrm{slot\_time}.
\end{equation}
Using the specific values of our experiment, we have
\begin{equation*}
	\SI{76000}{\micro\second} \ge (18 + \nMsgCntrl) * 9.5 * \SI{380}{\micro\second},
\end{equation*}
resulting in a value of $\nMsgCntrl = 3$ control messages per communication round.

\subsection{Predictive}
The relationships are a bit more complex in our \emph{predictive} approach, because \nMsgCntrl and the duration of a slot in Mixer are interrelated.
The $\mathrm{slot\_time}$ term in \eqref{eqn:mc_periodic} additionally depends on the aggregate overhead $S$ as calculated in \eqref{eqn:aggregate_size}.
This requires \nMsgCntrl to be calculated iteratively by choosing a value for \nMsgCntrl, then calculating $S$, and finally checking if this is below the maximum communication time.
In our experiments, with $\nAgent = 20$, $\aggsize = 4$, and $\nMsgCntrl = 2$ we get $S = 5$ additional bytes per packet.
As a result, $\mathrm{slot\_time}$ increases from \SI{380}{\micro\second} to \SI{400}{\micro\second} and we get
\begin{equation*}
	\SI{76000}{\micro\second} \ge (18 + 2) * 9.5 * \SI{400}{\micro\second}.
\end{equation*}

\end{document}